\definecolor{light-gray}{HTML}{E5E4E2}
\definecolor{light-gray2}{gray}{0.85}
\title{ GPMatch: A Bayesian Doubly Robust Approach to Causal Inference with Gaussian Process Covariance Function As a Matching Tool
}
\author[1\authfn{1}, 2\authfn{2}]{Bin Huang PhD}
\author[1\authfn{1}]{Chen Chen PhD}
\author[3\authfn{3}]{Jinzhong Liu PhD}
\affil[1]{Division of Biostatistics and Epidemiology, Cincinnati Children's Hospital Medical Center, Cincinnati, OH, 45229, USA}
\affil[2]{Department of Pediatrics, University of Cincinnati College of Medicine, Cincinnati, OH, 45267, USA}
\affil[3]{MedPace Inc, Cincinnati, OH, 45227, USA}
\runningauthor{Huang, B. et al.}
\begin{document}

\maketitle

\begin{abstract}
Gaussian process (GP) covariance function is proposed as a matching tool in GPMatch  within a full Bayesian framework under relatively weaker causal assumptions. The matching is accomplished by utilizing GP prior covariance function to define matching distance. We show that GPMatch provides a doubly robust estimate of the averaged treatment effect (ATE) much like the G-estimation, the ATE is correctly estimated when either conditions are satisfied: 1) the GP mean function correctly specifies potential outcome \(Y^{(0)}\); or 2) the GP covariance function correctly specifies matching structure.  Simulation studies were carried out without assuming any known matching structure nor functional form of the outcomes. The results demonstrate that GPMatch enjoys well calibrated frequentist properties, and outperforms many widely used methods including Bayesian Additive Regression Trees. The case study compares effectiveness of early aggressive use of biological medication in treating children with newly diagnosed Juvenile Idiopathic Arthritis, using data extracted from electronic medical records. 

% Please include a maximum of seven keywords
\keywords{Matching, Doubly Robust (DR) estimator, Marginal Structural Model (MSM), g-estimation, Directed Acyclic Graphic (DAG) }
\end{abstract}

\section{ Introduction }
Data from nonrandomized experiments, such as registry and electronic records, are becoming indispensable sources for answering causal inference questions in health, social, political, economics and many other disciplines. Under the assumptions of ignorable treatment assignment and distinct model parameters governing the science and treatment assignment mechanisms, \cite{Rubin1978} showed the Bayesian inference of causal treatment effect can be approached by directly outcome modeling, treating it as a missing potential outcome problem. Direct modeling is able to utilize the many Bayesian regression modeling techniques to address complex data type and data structures, such as examples in \cite{Hirano2000}, \cite{Zajonc2012}, \cite{Imbens1997} and \cite{Baccini2017}. 

Parameter rich Bayesian modeling techniques are particularly appealing as it does not presume an known functional form, thus may help mitigate potential model miss-specification issues. \cite{Hill2011} suggested Bayesian additive regression tree (BART) can be used for causal inference, and showed it produced more accurate estimates of average treatment effects compared to propensity score matching, inverse propensity weighted estimators, and regression adjustment in the nonlinear setting, and performed as well under the linear setting.  Others have used Gaussian Process in conjunction with Dirichlet Process priors, e.g. \cite{roy2016bayesian} and \cite{Xu2016}.  \cite{roy2017bayesian} devised enriched Dirichlet Process priors tackling missing covariate issues. However, naive use of regression techniques could lead to substantial bias in estimating causal effect as demostrated in \cite{Hahn2018}.  

The search for ways of incorporating propensity of treatment selection into the Bayesian causal inference has been long standing. Including propensity score (PS) as a covariate into the outcome model may be a natural way. However, joint modeling of outcome and treatment selection models leads to a “feedback” issue, and a two-stage approach was suggested by \cite{mccandless2010cutting},  \cite{zigler2013model} and many others. Discussion about whether the uncertainty of the first step propensity score modeling should be taken into account when obtaining the final result in the second step can be found in \cite{hill2006interval}, \cite{Ho2007}, \cite{Rubin2006}, \cite{Rubin1996} for details. \cite{Saarela2016} proposed an approximate Bayesian approach incorporating inverse probability treatment assignment probabilities as importance-sampling weights in Monte Carlo integration. It offers a Bayesian version to the augmented inverse probability treatment weighting (AIPTW). \cite{Hahn2017} suggested incorporating estimated treatment propensity into the regression to explicitly induce covariate dependent prior in regression model. These methods all require a separate step of treatment propensity modeling, thus may suffer if the propensity model is mis-specified. 

Matching is one of the most sought-after method used for designing observational study to answer causal questions. Matching experimental units on their pre-treatment assignment characteristics helps to remove the bias by ensuring the similarity or balance between the experimental units of the two treatment groups. Matching methods impute the missing potential outcome with the value from the nearest match or the weighted average of the values within the nearby neighborhood defined by (a chosen value) caliper. Matching on multiple covariates could be challenging when the dimension of the covariates are large. For this reason, matching is often performed using the estimated propensity score (PS) or by the Manhalanobis distance (MD). The idea is, under the no unmeasured confounder setting, matching induces balance between the treated and untreated groups. Therefore, it serves to transform a nonrandomized study into a pseudo randomized study. There are many different matching techniques, a comprehensive review is  provided in \cite{Stuart2010}. A recent study by \cite{226731} compared the PS matching with the MD matching and suggests that PS matching can result more biased and less accurate estimate of averaged causal treatment as the precision of matching improves, while the MD matching is showing improved accuracy.  Common to matching methods, the data points without a match are discarded. Such a practice may lead to a sample no longer representative of the target population. A user-specified caliper is often required, but different calipers could lead to very different results. Furthermore, matching on a miss-specified PS could lead to invalid causal inference results. 

\cite{rubin1973use} suggested that the combination of matching and regression is a better approach than using either of them alone. \cite{Ho2007} advocated matching as nonparametric preprocessing for reducing dependence on parametric modeling assumptions. \cite{Gutman2017} examined different strategies of combining the preprocessed matching with a regression modeling of the outcome through extensive simulation studies. They demonstrated that some commonly used causal inference methods have poor operating characteristics, and suggested regression modeling after pre-processed matching works better. To our knowledge, no existing method can accomplish matching and regression modeling in a single step. 

Gaussian process (GP) prior has been widely used to describe biological, social, financial and physical phenomena, due to its ability to model highly complex dynamic system and its many desirable mathematical properties. Recent literature, e.g.\cite{Choi2013} and \cite{Choi2007}, has established posterior consistency for Bayesian partially linear GP regression models. Bayesian modeling with GP prior can be viewed as a marginal structural model where the potential outcome under the non-intervention condition \(Y^{(0)}\) is modeled non-parametrically. It allows for predicting the missing response by a weighted sum of observed data, with larger weights assigned to those in closer proximity but smaller to those further away, much like a matching procedure. This motivated us to consider using GP prior covariance function as a matching tool for Bayesian causal inference.  

The idea of utilizing GP prior in Bayesian approach to causal inference is not new.  Examples can be found in \cite{roy2016bayesian} for addressing heterogeneous treatment effect, in \cite{Xu2016} for handling dynamic treatment assignment, and in \cite{roy2017bayesian} for tackling missing data. While these studies demonstrated GP prior could be used to achieve flexible modeling and tackle complex setting, no one has considered GP as a matching tool. This study adds to the literature in several ways. First, we offer a principled approach to Bayesian causal inference utilizing GP prior covariance function as a matching tool, which accomplishes matching and flexible outcome modeling in a single step. Second, we provide relaxed causal assumptions than the widely adopted assumptions from the landmark paper by \cite{rosenbaum1983}. By admitting additional random errors in outcomes and in the treatment assignment, these new assumptions fit more naturally within Bayesian framework. Under these weaker causal assumptions, GPMatch method offers a doubly robust approach in the sense that the averaged causal treatment effect is correctly estimated when either one of the conditions are met : 1) when the mean function correctly specifies the \(Y^{(0)}\); or 2) the covariance function matrix correctly specifies the matching structure. At last, the proposed method has been implemented in an easy-to-use publicly available on-line application (\url{https://pcats.research.cchmc.org/}).  

The rest of the presentation is organized as follows. Section 2 describes methods, where we present problem setup, causal assumptions, and the model specifications. The utility of GP covariance function as a matching tool is presented in Section 3, followed by discussions of its doubly robustness property. Simulation studies are presented in Section 4. Simulations are designed to represent the real world setting where the true functional form is unknown, including the well-known simulation design suggested by \cite{Kang2007}. We compared the GPMatch approach with some commonly used causal inference methods, i.e. linear regression with PS adjustment, AIPTW, and BART, without assuming any knowledge of the true data generating models. The results demonstrate that the GPMatch enjoys well calibrated frequentist properties, and outperforms many widely used methods under the dual miss-specification setting. Section 5 presents a case study, examining the comparative effectiveness of an early aggressive use of biological medication in treating children with recent diagnosed juvenile idiopathic arthritis (JIA). Section 6 presents summary, discussions and future directions.

\section{Method}
% Section 2.1
\subsection{Problem Setup and Notations}
\begin{figure}[bt]
\centering
\includegraphics[width=10cm]{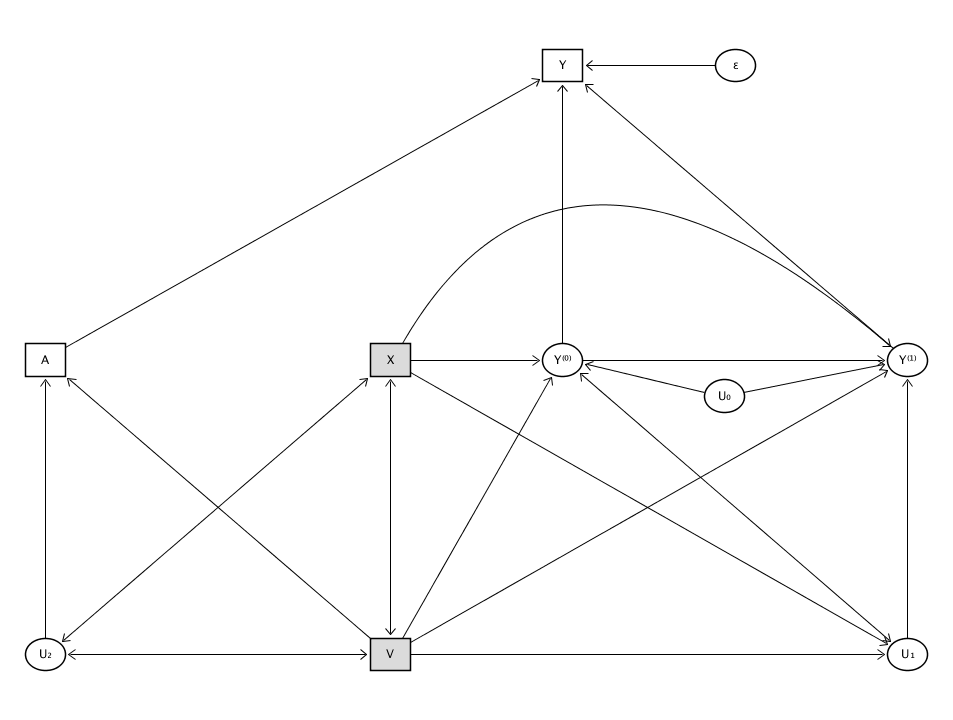}
\caption{The Directed Acyclic Graphic (DAG) Presentation of the Problem Setup}
\label{fig:0}
\end{figure}
The problem setup is depicted in the directed acyclic graphic (DAG), where the rectangular nodes are measured and oval nodes are latent or unmeasured variables. The \(\boldsymbol{X}\) and \(\boldsymbol{V}\) 
%\(\boldsymbol{X}, \boldsymbol{M}\) and \(\boldsymbol{V}\) 
are observed covariates, and \(\boldsymbol{Y}\) is the observed outcome. The treatment assignment (\(A = 0/1\)) is a binary indicator, where 0 indicating comparator or the nature occurring condition and  1 indicates intervention. Correspondingly, the potential outcomes   \((Y^{(0)}, Y^{(1)}) \), are two latent variables. The unmeasured covariates are denoted by \(U_0, U_1, U_2\),  representing three types of unmeasured covariates for \(Y^{(0)}, Y^{(1)} \) and \(A\) correspondingly.  The potential outcome \( Y^{(0)} \) under the controlled condition is determined jointly by \(\boldsymbol{X}\) a p-dimensional and  \( \boldsymbol{V}\) a q-dimensional vector of the observed covariates plus a unmeasured covariate  \(U_0\). Thus, \((\boldsymbol{X}, \boldsymbol{V}, U_0)\) are  prognostic variables. Similarly, the potential outcome \( Y^{(1)} \) under the intervention condition is determined jointly by the observed covariates \(( \boldsymbol{X}, \boldsymbol{V} )\) 
%\(( \boldsymbol{X}, \boldsymbol{V}, \boldsymbol{M})\) 
and the unobserved covariates (\(U_0, U_1)\). 
%The \(\boldsymbol{M}\) and \(U_1\) are observed and unobserved treatment effect modifiers. 
The observed outcome \(Y\) is a noisy version of the corresponding potential outcomes, with an error term \(\epsilon\).  The treatment is assigned according to an unknown propensity score, which is determined by the baseline covariates  \( \boldsymbol{V}\) 
%\( \boldsymbol{V}, \boldsymbol{M}\) 
and  \(U_2\).  The observed baseline covariates \(\boldsymbol{X}\)and \(\boldsymbol{V} \) 
%\(\boldsymbol{X}, \boldsymbol{V}\)and \(\boldsymbol{M} \) 
could be overlapping, whereas different symbols are used to distinguish their roles in science mechanisms and the treatment assignment process respectively. For example, X could include patient age, gender, genetic makeup, family disease history, past and current medication use as well laboratory results and other disease characteristics, which are directly related to the prognosis of the disease. The V could include the above X variable, as well as other considerations to the treatment decisions including insurance, social economic status of patient family, education and clinical centers. Most of these important X and V covariates are available in a patient registry and electronic medical records, thus are observable. Other factors could play a role in treatment decisions, such as patient and clinician's personal preferences, culture believes and past experiences. However, they are almost never recorded. These factors are collective referred as \(U_2\). The residual terms of responses \((\epsilon. U_0, U_1)\) can be overlapping or correlated, the corresponding links are omitted in the Figure 1 for better visual presentation. 

The DAG can be expressed by a set of structural equation models:
\begin{equation}
\begin{array}{l}
Y_i = A_i Y_i^{(1)}+(1-A_i) Y_i^{(0)} + \epsilon_i,   \label{eq:a1}\\
%Y_i^{(0)}=f^{(0)}(\boldsymbol{x_i},\boldsymbol{v_i}) + u_{0i}\\
%Y_i^{(1)}=f^{(1)}(\boldsymbol{x_i},\boldsymbol{v_i}, \boldsymbol{m_i})=f^{(0)}(\boldsymbol{x_i},\boldsymbol{v_i})+\tau (\boldsymbol{x_i,m_i}) + u_{0i} + u_{1i}\\
Y_i^{(a)}=f^{(0)}(\boldsymbol{x_i},\boldsymbol{v_i})+ a (\tau (\boldsymbol{x_i}) + u_{1i}) + u_{0i}\\
%Pr(A_i) = \pi(v_i, m_i, u_{2i}) \\
Pr(A_i) = \pi(v_i, u_{2i}) \\
\end{array}
\end{equation}
where  \(E(\epsilon_i) =0\) and \( E(u_{ki})=0, \) for \(k=0,1,2.\) To ensure the causal treatment effect can be estimated without bias, the following condition must be true:  \(\epsilon \bot (Y^{(0)},Y^{(1)})\), \( (U_0, U_1) \bot A|\boldsymbol{X,V}\),  \( U_2 \bot \epsilon\) and  \( U_2 \bot  Y | A, \boldsymbol{X},\boldsymbol{V}. \) Violation of any of these conditional independence condition can open up the back-door path from \(Y\) to \(A\) (\cite{pearl2009causality}). The \(f^{(0)}(\cdot)\), \(f^{(1)}(\cdot)\) and \(\pi(\cdot)\) are unknown functions that describes the potential outcome science mechanism and treatment assignment process.   The sample averaged treatment effect of  all individual level effect 
\( \tau_i = \tau(x_i) + u_{1i}\),
%\( \tau_i = \tau(x_i, m_i) + u_{1i}\),
\(\tau=\frac{1}{n} \sum_i \tau_i\)  is the parameter of interest, which is referred as the averaged treatment effect (ATE).  

%To ensure the causal treatment effect can be identified without bias, additional assumptions are required:
%\begin{equation*}
%    \epsilon  \bot (A, V, X), \hspace{1cm} \epsilon \bot U_2, \hspace{1cm} U_2 \bot  U_0 | \boldsymbol{X},\boldsymbol{V}, \hspace{1cm}  U_2 \bot  U_1 | %\boldsymbol{X},\boldsymbol{V}.
    %\epsilon  \bot (A, V, M, X), \\ \epsilon \bot U_2, \\  U_2 \bot  U1 | \boldsymbol{X},\boldsymbol{V}, \boldsymbol{M}. 
%\end{equation*}
%The first assumption is standard, suggesting the noise contained in the measurement of Y is independent from the treatment assignment and covariates. The rest three assumptions are necessary to ensure the residual error in the outcome regression independent from treatment assignment. This is easy to see if we re-express the model (1) for the outcome model, \(Y_i = f^{(0)}(\boldsymbol{x_i},\boldsymbol{v_i}) + A_i \tau (\boldsymbol{x_i}) + (\epsilon_i + u_{0i} + A_i u_{1i})\). Violation of any of the last three assumptions will lead to  the modeling parameters for \(Y\) and \(A\) models not distinct. In the DAG language, it will open up the back-door path between A and Y (\cite{Pearl2009}), rending the causal treatment effect unidentifiable.  

% Section 2.2
\subsection{The Causal Assumptions}
The causal assumptions are necessary to ensure unbiased estimate of casual treatment effect. The causal assumptions are presented in the DAG and the structural equation (1). 
%which implies conditional independence for the observed responses   \(Y \bot \boldsymbol{X,V} | (A, Y^{(0)}, Y^{(1)}) \) and the treatment assignment \(A \bot  (Y^{(0)}, Y^{(1)})|\boldsymbol{X,V}.\) 
Notice that the DAG includes three types of unmeasured covariates, where \(U_0\) indicates unknown correlation between the pair of potential outcomes,  \(U_1\) a potential lurking variable, and  \(U_2\)  a potential confounding variable.   Under the conditional independence conditions, the observed covariates \((\boldsymbol{X}, \boldsymbol{V})\) is a minimum sufficient set  for identifying causal treatment effect. Further, with assuming distinct model parameters, it is relatively straight forward to see that the posterior of the potential outcomes can be derived directly by 
\begin{equation*}
[Y^{(0)},Y^{(1)} | A, \boldsymbol{X,V}, Y] = \frac{[Y, Y^{(0)}, Y^{(1)} | A, \boldsymbol{X,V}]}{[Y|A,\boldsymbol{X,V}]}.
\end{equation*}Comparing to  the widely adopted causal assumptions laid out in the landmark paper by the \cite{rosenbaum1983} (RR), the DAG presents a weaker version of causal assumptions :
\begin{enumerate}[label=CA\arabic*.]
\item \label{CA1} Instead of the stable unit treatment value assumption (SUTVA), we assume stable unit treatment value expectation assumption (SUTVEA).  Specifically,
\begin{enumerate}[label=(\roman*)]
\item The consistency assumption of RR requires the observed outcome is an exact copy of the potential outcome, i.e. \(Y_i=Y_i^{(0)}(1-A_i)+Y_i^{(1)} A_i\). Instead, only  \(Y \bot \boldsymbol{X,V} | (A, Y^{(0)}, Y^{(1)})\) is required. In other words, we consider the observed outcome is a noisy copy of the potential outcome where the expectation of the observed outcome \(E(Y_i)=Y_i^{(0)}(1-A_i)+Y_i^{(1)} A_i.\)    
\item The no interference assumption of RR requires the potential outcomes of one experiment unit is not influenced by the  potential outcomes of another experiment unit, i.e. \(Y_i^{(a)} \bot Y_j^{(b)}\).  Instead, we assume the observed outcomes from different units are conditional independence given the observed covariates \(Y_i \bot Y_j | A, \boldsymbol{X},\boldsymbol{V}\)
%\(Y_i \bot Y_j | A, \boldsymbol{X},\boldsymbol{V}, \boldsymbol{M}\). 
\end{enumerate}
The SUTVEA assumption acknowledges existence of residual random error in the outcome measure.  The observed outcomes may differ from the corresponding true potential outcomes due to some measurement errors. In addition, the observed outcomes could differ when treatment received deviates from its intended version of treatment. For example, outcomes could differ by the timing of the treatment, pre-surgery preparation procedure or the concomitant medication. In addition, we consider the potential outcomes from different experimental units may be correlated, where the correlations are determined by the covariates. Since only one outcome could be observed out of all potential outcomes, the causal inference presents a highly structured missing data setup where the correlations between \( (Y_i^{(1)}, Y_i^{(0)})\) are not directly identifiable.  Admitting residual errors and allowing for explicit modeling of the covariance structure, the new assumptions could facilitate better statistics inference.
\vspace{5mm}
\item \label{CA2}  Similarly as in RR, we assume ignorable treatment assignment assumption \( [Y^{(a)} | A=1, \boldsymbol{X,V}] =[Y^{(a)} | A=0, \boldsymbol{X,V}], \) for \(a = 0,1\). That is the marginal distribution of a potential outcome can be obtained by modeling the observed covariates only, indepedent from the treatment assignment. 
%Considering causal inference as a missing potential outcome problem, this is equivalent to the missing at random assumption for the outcomes, i.e. \( [Y, Y^{(0)}, Y^{(1)} | A=1, \boldsymbol{X,V}] =[Y, Y^{(0)}, Y^{(1)} | A=0, \boldsymbol{X,V}]. \) 
%\( [Y, Y^{(0)}, Y^{(1)} | A=1, \boldsymbol{X,V,M}] =[Y, Y^{(0)}, Y^{(1)} | A=0, \boldsymbol{X,V,M}] \).  
As depicted in DAG, presence of  unmeasured confounder is admissible, as long as the back-door path from \(Y\) to \(A\) is blocked by the observed covariates. In practice, it is almost never possible for us to capture all the considerations factored into a treatment decisions, such as personal preferences and past experiences. However, it is reasonable to consider the uncounted residual error in treatment assignment, conditional on the observed covariates (e.g. patient demographics, insurance, disease characteristics,  laboratory and medical diagnostic tests), is not related to the potential outcomes.   
\vspace{5mm}
\item Positivity Assumption.  Same as in RR, we assume every sample unit has nonzero probability of being assigned into either one of the treatment arms, i.e. \(0<Pr(A_i|\boldsymbol{x_i,v_i})<1\) for \( \forall (\boldsymbol{x_i, v_i} ) \).  This assumption is adopted to ensure the equipoise of the causal inference. 

\end{enumerate}

% The above is derived assuming all \(X\) variables are observed. In observational studies, it is rarely the case that all \(X\) are observed. In this case, a missing data imputation model can be introduced, where \([\theta,Y^{(0)},Y^{(1)},X^m|A,X^o,Y]=[\theta_f,\sigma_y,Y^{(0)},Y^{(1)}|A,X^o,X^m,Y ]\times [X^m|X^o,A,Y,\theta_{MI} ]\times [\theta_{MI}]\). Thus, the missing baseline covariates can be handled by a two-staged approach of first performing multiple missing data imputation, then predicting the potential outcomes for each of the imputed dataset.
  
% Section 2.3
\subsection{The GPMatch Model Specifications}

Marginal structural model (MSM) is a widely adopted modeling approach to causal inference, which serves as a natural framework for Bayesian causal inference. The MSM specifies 
\begin{equation*}
Y_i^{(1)} = Y_i^{(0)}+A_i\tau_i.
\end{equation*}
Without prior knowledge about the true functional form, we let \( Y_i^{(0)} \sim GP(\boldsymbol{\mu_f}, \boldsymbol{K})  \), where the mean function \(\mu_f\) maybe modeled by a parametric regression equation, and  \(\boldsymbol{K} \) defines the covariance function of the GP prior. Specifically, GPMatch is proposed as a partially linear Gaussian process regression fitting to the observed outcomes, 
\begin{equation} \label{eq:d1}
%\begin{array} {lr}
Y_i=f_i(\boldsymbol{x_i}, \boldsymbol{v_i})+A_i\tau(\boldsymbol{x_i})+\epsilon_i,
%f_i(\boldsymbol{x_i}, \boldsymbol{v_i}) = \mu_f(x_i) + \eta(v_i),
%\begin{array} 
\end{equation}
where 
\begin{equation*}
\begin{array} {lr}
f_i(\boldsymbol{x_i}, \boldsymbol{v_i}) = \mu_f(\boldsymbol{x_i}) + \eta(\boldsymbol{v_i}),\\
\eta_i(\boldsymbol{v_i}) \sim GP(0, \boldsymbol{K}),\\
\epsilon_i \sim N(0, \sigma_0),\\
\epsilon_i \perp \eta_i.
\end{array}
\end{equation*}
Here, we may let \(\boldsymbol{\mu_f}=((1,\boldsymbol{{X_i}'})\beta)_{n\times 1}\), where \(\boldsymbol{\beta} \) is a \( (1+p) \) dimension parameter vector of regression coefficients for the mean function. This is to allow for implementing any existing knowledge about the prognostic determinants to the outcome. Also, let \(\boldsymbol{\tau} = \left( (1 , \boldsymbol{X_i}') \boldsymbol{\alpha} \right)_{n\times 1} \) to allow for potential heterogeneous treatment effect, where \(\boldsymbol{\alpha} \) is a \( (1+p) \) dimension parameter vector of regression coefficients for the treatment effect. 

Let \(\boldsymbol{Y_{n}}=(Y_{i})_{n\times 1}\), the model \eqref{eq:d1} can be re-expressed in a multivariate representation 
\begin{equation} 
\boldsymbol{Y_{n}}|\boldsymbol{A,X,V,\gamma} \sim MVN(\boldsymbol{Z'\gamma,\Sigma }),\label{eq:d2}
\end{equation}
where \(\boldsymbol{Z}'=(1, \boldsymbol{X_i}', A_i, A_i\times \boldsymbol{X_i}')_{n\times (2+2 p)}\), \(\boldsymbol{\gamma}=(\boldsymbol{\beta, \alpha})\), \(\boldsymbol{\Sigma}=(\sigma_{ij})_{n\times n}\), with \(\sigma_{ij}=K(\boldsymbol{v_i},\boldsymbol{v_j})+\sigma_0^2\delta_{ij}\). The \(\delta_{ij}\) is the Kronecker function, \(\delta_{ij}=1\) if \(i=j\), and 0 otherwise.   

Gaussian process can be considered as distribution over function. The covariance function \(\boldsymbol{K} \), where \( k_{ij} = Cov(\boldsymbol{\eta_i},\boldsymbol{\eta_j})\), plays a critical role in GP regression. It can be used to reflect the prior belief about the functional form, determining its shape and degree of smoothness. In the next section, we show for the data comes from an experimental design where the matching structure is known, GP covariance could be formulated to reflect the matching structure. Often, the exact matching structure is not available, a natural choice for the GP prior covariance function \( \boldsymbol{K} \) is the squared-exponential (SE) function, where 
\begin{equation}
K(v_i,v_j)=\sigma_f^2 exp \left( -\sum_{k=1}^q \frac{|v_{ki}-v_{kj}|^2}{\phi_k}\right), \label{eq:d4}
\end{equation}
for \(i,j=1,...,n\). The \((\phi_1,\phi_2,...,\phi_q)\) are the length scale parameters for each of the covariates \(\boldsymbol{V}\). 

There are several considerations in choosing the SE covariance function. The GP regression with SE covariance can be considered as a Bayesian linear regression model with infinite basis functions, which is able to fit a smoothed response surface. Because of the GP's ability to choose the length-scale and covariance parameters using the training data, unlike other flexible models such as splines or the supporting vector machine (SVM), GP regression does not require cross-validation(\cite{Rasmussen2006}). Moreover, SE covariance function provides a distance metric that is similar to Mahalanobis distance, thus it could be served as a matching tool . 

The model specification is completed by specification of the rest of priors. 
\begin{equation*} 
\begin {array}{lr}
    \boldsymbol{\gamma} \sim MVN \left( \boldsymbol{0}, \omega \sigma_{lm}^2 ( \boldsymbol{(Z Z')} )^{-1} \right),\\
    \sigma_0^2 \sim IG(a_0, b_0 ),\\
    \sigma_f^2 \sim IG(a_f, b_f ),\\
    \phi_k \sim IG(a_\phi, b_\phi ).
  \end{array}
\end{equation*}
We set \( \omega = 10^6, a_\phi = b_\phi = 1, a_0 = a_f = 2, b_0 = b_f = \sigma_{lm}^2/2, \sigma_{lm}^2 \) is the estimated variance from a simple linear regression model of \( Y \) on \(A\) and \(X\) for computational efficiency. 
 
The posterior of the parameters can be obtained by implementing a Gibbs sampling algorithm: first sample the covariate function parameters from its posterior distribution\( [ \boldsymbol{\Sigma} | Data,\boldsymbol{\alpha,\beta} ] \); then sample the regression coefficient parameter associated with the mean function from its conditional posterior distribution \( [ \boldsymbol{\alpha,\beta}| Data,\boldsymbol{\Sigma}] \), which is a multivariate normal distribution. The individual level treatment effect can be estimated by \(\hat{\tau}(\boldmath{x_i}) = (1 , \boldmath{X_i})' \hat{\boldsymbol{\alpha}}\) and the averaged treatment effect is estimated by
\(\hat{ATE} = \sum_{i=1}^n \frac{\hat{\tau}(\boldmath{x_i})}{n} \).

% Section 3
\section{Estimate ATE: Connections with Matching and G-estimation} 
\subsection{Design the GP Covariance Function as a Matching Tool}
To demonstrate the utility of the GP covariance function as a matching tool, let us first consider design a covariance function for the known matching data structure . In other words, we assume for any given sample unit, we know who are the matching units.  For simplicity, we consider fitting the data with a simple nonparametric version of the GPMatch,  
\begin{equation}
\boldsymbol{Y_n} \sim MVN(\mu \boldsymbol{1}_n+\tau \boldsymbol{A}_n, \boldsymbol{\Sigma}),\label{eq:d5}
\end{equation}
where \(\boldsymbol{\Sigma = K} + {\sigma_0}^2 \boldsymbol{I_n}\).

With known matching structure, the GP covariance function may present the matching structure by letting \(\boldsymbol{K}=(k_{ij})_{n\times n} \), where \( k_{ij}=1 \) indicates that the pair is completely matched, and \(k_{ij}=0\) if unmatched. A common setting of the matched data can be divided into several blocks of subsample within which the matched data points are grouped together. Subsequently,  we may rewrite the covariance function of the nonparametric GP model \eqref{eq:d5}
as a block diagonal matrix where the \(l^{th}\) block matrix takes the form
\begin{equation*}
\boldsymbol{\Sigma_l} = \sigma^2 \left[(1-\rho)\boldsymbol{I}_{n_l}+\rho \boldsymbol{J}_{n_l}\right],
\end{equation*}
where \(\sigma^2 =1+\sigma_0^2\),  \(\rho = 1/\sigma^2\)and \(\boldsymbol{J}_{n_l}\) denotes the matrix of ones. The parameter estimates of the regression parameters can be derived by
\begin{equation*}
\left(\begin{array}{c}
  \hat{\mu} \\
  \hat{\tau}
 \end{array}\right) = \left[\left(\begin{array}{c}
  \boldsymbol{1}_n' \\
  \boldsymbol{A}_n'
 \end{array}\right) \boldsymbol{\Sigma}^{-1} \left(\begin{array}{lr}
  \boldsymbol{1}_n & \boldsymbol{A}_n
 \end{array}\right) \right]^{-1} \left(\begin{array}{c}
  \boldsymbol{1}_n' \\
  \boldsymbol{A}_n'
 \end{array}\right) \boldsymbol{\Sigma}^{-1} \boldsymbol{Y_n}.
\end{equation*}
It follows that the estimated average treatment effect is,
\begin{equation*}
\hat{\tau}=\frac{\boldsymbol{1}_n' \boldsymbol{\Sigma}^{-1} \boldsymbol{1}_n \boldsymbol{A}_n' \boldsymbol{\Sigma}^{-1} \boldsymbol{Y_n}-\boldsymbol{A}_n' \boldsymbol{\Sigma}^{-1} \boldsymbol{1}_n \boldsymbol{1}_n' \boldsymbol{\Sigma}^{-1} \boldsymbol{Y_n} }{\boldsymbol{1}_n' \boldsymbol{\Sigma}^{-1} \boldsymbol{1}_n \boldsymbol{A}_n' \boldsymbol{\Sigma}^{-1} \boldsymbol{A_n}-\boldsymbol{A}_n' \boldsymbol{\Sigma}^{-1} \boldsymbol{1}_n \boldsymbol{1}_n' \boldsymbol{\Sigma}^{-1} \boldsymbol{A_n}},
\end{equation*}
Applying the Woodbury, Sherman \& Morrison formula, we see \(\boldsymbol{\Sigma}^{-1}\) is a block diagonal matrix of
\begin{equation*}
\boldsymbol{\Sigma_l}^{-1}=\frac{1}{\sigma^2 (1-\rho) (1-\rho +n_l)}\left[ (1+(n-1)\rho) \boldsymbol{I}_{n_l}-\rho \boldsymbol{J}_{n_l}\right].
\end{equation*}
Let \(\bar{Y}_{l(a)}\) denote the sample mean of outcome and \(n_{l(a)}\) number of observations for the control \((a=0)\) and treatment group \((a=1)\) within the \(l^{th}\) subclass, \(l=1,2,...,L\). The treatment effect can be expressed as a weighted sum of two terms
\begin{equation*}
\hat{\tau}=\lambda \hat{\tau}_1+(1-\lambda) \hat{\tau}_0,
\end{equation*}
where \(\lambda =\frac{\rho D1}{\rho D1 +(1-\rho) D2}\), \(\hat{\tau}_1=\frac{C1}{D1}\) and \(\hat{\tau}_0=\frac{C2}{D2}\),
\begin{equation*}
\begin {array}{lr}
    C1=\sum q_l n_l \times \sum q_l n_{l(1)} n_{l(0)} \left( \bar{Y}_{l(1)}-\bar{Y}_{l(0)} \right),\\
    C2=\sum q_k n_{l(0)} \times \sum q_l n_{l(1)} \bar{Y}_{l(1)}- \sum q_l n_{l(1)} \times \sum q_l n_{l(0)} \bar{Y}_{l(0)},\\
    D1=\sum q_l n_l \times \sum q_l n_{l(1)} n_{l(0)} ,\\
    D2=\sum q_l n_{l(1)} \times \sum q_l n_{l(0)},
  \end{array}
\end{equation*}
\(q_l=(1-\rho+\rho n_l)^{-1}\), \(n_l=n_{l(0)}+n_{l(1)}\) and the summations are over \(l=1,...,L\). To gain better insight into this estimator, it should help to consider two special matching cases. 
%First, let us consider a simple setting where the data comes from a completely randomized controlled trial. Here, all data points are independent, i.e. \(\boldsymbol{\Sigma}=\sigma_0\boldsymbol{I_n}\) and \(\rho=0\). It is easy to see that the treatment effect \(\hat{\tau}=(\bar{Y}_1-\bar{Y}_0)\), i.e. treatment effect is the simple groups mean difference. 

The first example is a matched twin experiment, where for each treated unit there is a untreated twin. Here, we have a \(2n\times 2n\) block diagonal matrix  \(\boldsymbol{\Sigma_{2n}}=\boldsymbol{I_n \otimes J_2}+\sigma_0 \boldsymbol{I_{2n}}\). Thus, \(\sigma=1+\sigma_0^2\), \(\rho=\frac{1}{1+\sigma_0^2}\), \(n_k=2\), \(n_{k(0)}=n_{k(1)}=1\). Substitute them into the treatment effect formula derived above, we have the same 1:1 matching estimator of treatment effect \(\hat{\tau}=\bar{Y}_1-\bar{Y}_0\). 

The second example is a stratified randomized experiment, where the true propensity of treatment assignment is known. Suppose the strata are equal sized, \(\boldsymbol{\Sigma}\) is a block diagonal matrix of \(\boldsymbol{I_L \otimes J_n}+\sigma_0 \boldsymbol{I_n}\), where \(L\) is total number of strata, the total sample size is \(N=Ln\).  It is straight forward to derive \(\sigma=1+\sigma_0^2\), \(\rho=\frac{1}{1+\sigma_0^2}\), \(n_l=n\), for \(l=1,...,L\). Then the treatment effect is a weighted sum of \( \hat{\tau}_0=\bar{Y}_1-\bar{Y}_0,\) and \(\hat{\tau}_1= \frac{\sum n_{l(0)} n_{l(1)} \left( \bar{Y}_{l(1)}-\bar{Y}_{l(0)} \right)}{\sum n_{l(0)} n_{l(1)}} \).  Where the weight \(\lambda=\frac{N\sum n_{l(0)} n_{l(1)}}{n_1 n_0 \sigma_0^2+N\sum n_{l(0)} n_{l(1)}}\) is a function of sample sizes and \(\sigma_0^2\). We can see when \(\sigma_0^2 \rightarrow 0\), then \(\lambda \rightarrow 1\), \(\tau \rightarrow \hat{\tau}_1\). That is when the outcomes are measured without error, the treatment effect is a weighted average of \( \bar{Y}_{l(1)}-\bar{Y}_{l(0)} \), i.e. the group mean difference for each strata. As \(\sigma_0^2\) increase, \(\lambda\) decrease, then the estimate of \(\tau\) puts more weights on \(\hat{\tau}_0\). In other words, GP estimate of treatment is a shrinkage estimator, where it shrinks the strata level treatment effect more towards the overall sample mean difference when outcome variance is larger. 

More generally, instead of 0/1 match, the sample units may be matched in various degrees. By letting the covariance function takes a squared-exponential form, it offers a way to specify a distance matching, which closely resembles Mahalanobis distance matching. For a pair of "matched" individuals, i.e. sample units with the same set of confounding variables \(\boldsymbol{v_i} = \boldsymbol{v_j} \), the model specifies \(Corr(Y_i^{(0)}, Y_j^{(0)}) = 1 \). In other words, the "matched" individuals are expected to be exchangeable.  As the data points move further apart in the covariate space of \( \Omega_v\), their correlation becomes smaller. When the distant is far apart sufficiently, the model specifies \(Corr(Y_i^{(0)}, Y_j^{(0)}) \approx 0 \) or "unmatched".  Distinct length scale parameters are used to allow for some confounder playing more important roles than others in matching. By manipulating the values of \(v_i\) and the corresponding length scale parameter, one could formulate the SE covariance matrix to reflect the known 0/1 or various degree of matching structure. However, the matching structure is usually unknown, and was left to be estimated in the GPMatch model informed by the observed data.

\subsection{Doubly Robust Estimator of ATE}

\begin{theorem}
Let the true treatment effect be \(\tau^* \), the GPMatch estimator is an unbiased estimate of the average treatment effect, i.e. \(E(\tau_i) = \tau^* \), for \(i = 1, ...n,\) when either one of the condition is true:  i) the GP mean function is correctly specified, i.e. \(  E(Z_i^\prime \hat{\gamma}) =Y_i^{(0)}\); and ii) the GP covariance function is correctly specified, in the sense that, from the weight-space point of view of GP regression, the weighted sum of treatment assignment \(\tilde{A_i}\) correctly specifies the true treatment propensity \( \pi_i = Pr(A_i =1) \).
\end{theorem}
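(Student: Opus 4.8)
The plan is to exploit the fact that the GPMatch estimator of \(\boldsymbol\gamma\) is the generalized least squares (GLS) solution \(\hat{\boldsymbol\gamma}=(\boldsymbol Z\boldsymbol\Sigma^{-1}\boldsymbol Z')^{-1}\boldsymbol Z\boldsymbol\Sigma^{-1}\boldsymbol Y_n\) implied by the multivariate representation \eqref{eq:d2}, which is linear in \(\boldsymbol Y_n\). Since \(\hat{ATE}=\tfrac1n\sum_i(1,\boldsymbol X_i')\hat{\boldsymbol\alpha}\) is a fixed linear functional of the treatment-block coefficients, unbiasedness of the ATE reduces to unbiasedness of \(\hat{\boldsymbol\alpha}\). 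First I would partition the design into the baseline block \(\boldsymbol Z_0'=(1,\boldsymbol X_i')\) carrying \(\boldsymbol\beta\) and the treatment block \(\boldsymbol Z_1'=(A_i,A_i\boldsymbol X_i')\) carrying \(\boldsymbol\alpha\), and apply the Frisch--Waugh--Lovell identity in the \(\boldsymbol\Sigma^{-1}\) inner product. Writing \(\boldsymbol M_0=\boldsymbol I-\boldsymbol Z_0'(\boldsymbol Z_0\boldsymbol\Sigma^{-1}\boldsymbol Z_0')^{-1}\boldsymbol Z_0\boldsymbol\Sigma^{-1}\) for the \(\boldsymbol\Sigma^{-1}\)-weighted residual-maker that partials out the mean columns, and using the symmetry \(\boldsymbol\Sigma^{-1}\boldsymbol M_0=\boldsymbol M_0'\boldsymbol\Sigma^{-1}\) together with the idempotency of \(\boldsymbol M_0\), I obtain \(\hat{\boldsymbol\alpha}=(\boldsymbol Z_1\boldsymbol\Sigma^{-1}\boldsymbol M_0\boldsymbol Z_1')^{-1}\boldsymbol Z_1\boldsymbol\Sigma^{-1}\boldsymbol M_0\boldsymbol Y_n\).

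Next I would substitute the true data-generating mean. Under the structural model \eqref{eq:a1} with a correctly parametrized (linear) effect modification, \(E(\boldsymbol Y_n\mid \boldsymbol A,\boldsymbol X,\boldsymbol V)=\boldsymbol m^*+\boldsymbol Z_1'\boldsymbol\alpha^*\), where \(\boldsymbol m^*=E(\boldsymbol Y^{(0)})\) is the vector of true control-arm means. Taking expectations and cancelling the identity term yields \(E(\hat{\boldsymbol\alpha})=\boldsymbol\alpha^*+(\boldsymbol Z_1\boldsymbol\Sigma^{-1}\boldsymbol M_0\boldsymbol Z_1')^{-1}\,\boldsymbol Z_1\boldsymbol\Sigma^{-1}\boldsymbol M_0\,\boldsymbol m^*\), so the entire bias is governed by the bilinear form \(\boldsymbol Z_1\boldsymbol\Sigma^{-1}\boldsymbol M_0\,\boldsymbol m^*\). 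The crux is to read this as an inner product between a ``residualized treatment'' and a ``residualized outcome mean'': defining the weight-space treatment contrast \(\tilde{\boldsymbol A}=\boldsymbol\Sigma^{-1}\boldsymbol M_0\boldsymbol Z_1'\), the bias equals \(\tilde{\boldsymbol A}'\boldsymbol m^*=(\boldsymbol M_0\boldsymbol Z_1')'\boldsymbol\Sigma^{-1}\boldsymbol m^*\), the \(\boldsymbol\Sigma^{-1}\)-weighted cross-product of the confounder-adjusted treatment and the control-arm mean. This is precisely the product-of-residuals structure that drives the G-estimation and AIPTW scores cited earlier.

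The double robustness then follows by showing this single cross-product vanishes under either hypothesis. Under condition (i), correct specification of the GP mean forces the fitted control-arm surface to absorb \(\boldsymbol m^*\); formally \(\boldsymbol m^*\) lies in the span generated by \(\boldsymbol Z_0'\) and the GP-representable functions, so its \(\boldsymbol\Sigma^{-1}\)-projection residual \(\boldsymbol M_0\boldsymbol m^*\) (equivalently the outcome residual \(\boldsymbol Y^{(0)}-\boldsymbol Z'\hat{\boldsymbol\gamma}\), which is zero in expectation by the stated form of (i)) makes the bias collapse irrespective of \(\tilde{\boldsymbol A}\). Under condition (ii), I would invoke the weight-space view of GP regression \cite{Rasmussen2006} to argue that a correctly specified covariance function makes \(\tilde{\boldsymbol A}\) behave as the propensity residual, i.e.\ \(E(\tilde A_i)\propto A_i-\pi_i\) with \(\pi_i=\Pr(A_i=1)\); since \(A-\pi\) is orthogonal to any function of \((\boldsymbol X,\boldsymbol V)\), in particular to \(\boldsymbol m^*\), the cross-product \(\tilde{\boldsymbol A}'\boldsymbol m^*\) has expectation zero. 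In both cases \(E(\hat{\boldsymbol\alpha})=\boldsymbol\alpha^*\), hence \(E(\hat\tau_i)=\tau^*\).

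I expect the main obstacle to be making condition (ii) precise: establishing rigorously that ``the covariance function correctly specifies the matching structure'' translates into \(\tilde{\boldsymbol A}\) reproducing the propensity residual. This requires pinning down how the kernel \(K(\boldsymbol v_i,\boldsymbol v_j)\), which enters only through \(\boldsymbol\Sigma\), induces the balancing weights in \(\boldsymbol\Sigma^{-1}\boldsymbol M_0\), and care is needed to separate the marginalized covariance role of the GP realization \(\eta\) from its role as a mean component when interpreting condition (i). A secondary technical point is verifying that the effect-modification parametrization is itself correct, so that \(E(\boldsymbol Y_n)=\boldsymbol m^*+\boldsymbol Z_1'\boldsymbol\alpha^*\) holds exactly; I would state this as a maintained assumption and confine the double-robustness claim to the nuisance components \(\boldsymbol m^*\) and \(\pi\).
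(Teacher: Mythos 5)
Your route is genuinely different from the paper's. You work from the explicit GLS solution and the Frisch--Waugh--Lovell residual-maker \(M_0\) in the \(\Sigma^{-1}\) inner product, whereas the paper recasts the GP fit in weight space as a kernel smoother, writes \(\hat Y_i^{(a)} = \tilde Y_i + (a-\tilde A_i)\hat\tau\) with kernel weights \(w_{ij}\propto \kappa_{ij}\), and characterizes \(\hat\tau\) as an M-estimator solving \(\sum_i\Psi_i(\tau)=0\) with \(\Psi_i(\tau)=\bigl(Y_i-\tilde Y_i-\tau(A_i-\tilde A_i)\bigr)(A_i-\tilde A_i)\). Both routes expose the same product-of-residuals structure that links the method to G-estimation. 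Your treatment of condition (i) — if \(m^*\) lies in \(\mathrm{span}(Z_0')\) then \(M_0m^*=0\), so the conditional bias is exactly zero — is in fact tighter than the paper's argument, which merely asserts that \(\Sigma\) ``degenerates to a diagonal matrix.'' One caution there: drop the clause ``and the GP-representable functions.'' In the marginalized model the kernel component sits in the error covariance, not in the mean, so a piece of \(m^*\) that is kernel-representable but outside \(\mathrm{span}(Z_0')\) leaves \(M_0m^*\neq 0\), and your condition-(i) argument does not close for it.

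The genuine gap is the final step of condition (ii). The conditional bias is \((Z_1\Sigma^{-1}M_0Z_1')^{-1}\,Z_1\Sigma^{-1}M_0\,m^*\), and \(A\) enters \emph{both} factors. Even granting your key claim that the residualized treatment equals the propensity residual \(A-\pi\), so that the cross-product \(Z_1\Sigma^{-1}M_0\,m^*\) has conditional mean zero, the expectation of its product with the random inverse matrix does not factor, and \(E(\hat{\boldsymbol\alpha})=\boldsymbol\alpha^*\) does not follow. (In the scalar case the bias is a ratio whose numerator is odd and whose denominator is even in \(A-\pi\); Bernoulli residuals are not symmetrically distributed unless \(\pi_i=1/2\), so the expectation of the ratio is not exactly zero.) The paper sidesteps exactly this point: it proves only that the estimating function is conditionally unbiased at the truth, \(E(\Psi_i(\tau^*))=0\) when \(\tilde A_i=\pi_i\) — the G-estimation-type statement, which yields consistency rather than exact finite-sample unbiasedness of the estimator. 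To make your argument sound, stop at the unbiasedness of the cross-product (your estimating function) instead of pushing the expectation through \((Z_1\Sigma^{-1}M_0Z_1')^{-1}\). Relatedly, note that your \(\tilde{\boldsymbol A}=\Sigma^{-1}M_0Z_1'\) is not the paper's \(\tilde A_i=\sum_j w_{ij}A_j\), which is a Nadaraya--Watson-type propensity estimate; the obstacle you flag — converting ``correct covariance function'' into ``residualized treatment equals propensity residual'' — is resolved in the paper essentially by definition, since \(\tilde A_i=\pi_i\) is taken as the meaning of correct specification rather than derived from properties of the kernel.
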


\begin{proof}

It is relatively straight forward for the first part. From the GPMatch model \eqref{eq:d2}   \(\boldsymbol{Y_n} \sim MVN(\boldsymbol{Z^\prime \gamma, \Sigma})\),  when the linear regression model fits the potential outcome correctly, i.e. \(  E(Z_i^\prime \hat{\gamma}) =Y_i^{(0)}\), then \(\Sigma\) degenerate to a diagonal matrix, suggesting all units are exchangeable. It follows \( E(\hat{\tau}) = \tau^* \),  the treatment effect is correctly estimated.   

The second part proceeds as the following. From the weight-space point of view, the  GPMatch model predicts the potential outcomes using a weighted sum of the observed outcomes, 
\begin{equation} 
\begin {array}{lr}
    \hat{Y_i}^{(a)} = \sum_{j=1}^n {w_{ij} (Y_{j}- A_{j} \hat{\tau} )} + a \hat{\tau} = \tilde{Y_i} + (a - \tilde{A_i}) \hat{\tau}, \label{eq:d5a}
  \end{array}
\end{equation}
where \(\tilde{Y_i}=\sum_{j=1}^n w_{ij} Y_j\) and \(\tilde{A}_i=\sum_{j=1}^n w_{ij} A_j\), for \(i=1,...,n\). The weight \( w_{ij} = \frac{\kappa_{ij}}{\sum_j\kappa_{ij}} \) where \(\kappa_{ij}=\boldsymbol{k}(\boldsymbol{v_j})'\boldsymbol{\Sigma}^{-1}\), with \(\boldsymbol{k}(\boldsymbol{v_j})=\left( k(\boldsymbol{v_j},\boldsymbol{v_i}) \right)_{n\times 1}\).  Thus, the \(\tilde{Y_i}\) and \(\tilde{A_i}\) could be considered as the Nadaraya-Watson estimator of the observed outcomes and treatment assignment for each of the i-th unit in the sample.  
The estimate of treatment effect could be obtained by solving \( \frac{\partial{\sum_{i=1}^n \left( Y_i - \hat{Y_i}^{(A_i)} \right)^2}}{\partial{\tau}} = 0 \). We can see that, given a known GP covariance function, the GPMatch treatment effect \(\hat{\tau}\) is an M-estimator that satisfies \(\sum \Psi_i(\hat{\tau}) = 0\), where
\begin{equation} \label{eq:d6}
\begin {array}{lr}
\Psi_i(\tau) = \left( Y_i-\tilde{Y_i}-\tau (A_i-\tilde{A}_i) \right)(A_i-\tilde{A}_i)=0, 
 \end{array}
\end{equation}
Let the true propensity be \(\pi_i = Pr(A_i) \), given the SUTVEA, we have \( Y_i = A_i Y_i^{(1)} + (1-A_i) Y_i^{(0)} + \epsilon_i \). Given the true treatment effect \(\tau^*\), it can be derived that \( Y_i^{(a)} = E(Y_i) + (a-\pi_i) \tau^* \).  When \(\tilde{A_i} = \pi_i \) is true, we have \(\Psi_i(\tau) = [E(Y_i)-\tilde{Y_i} + (A_i-\pi_i) (\tau - \tau^*) +\epsilon_i](A_i - \pi_i) \). Thus, the GPMatch estimator is an M-estimator of ATE, where the estimating function is conditionally unbiased, i.e. \(E(\Psi_i({\tau^*})) = 0\), for \(i = 1, ...n\), when the GP covariance function is correctly specified in the sense \(\tilde{A_i} = \pi_i \) . 

\end{proof}

\begin{remark}

There are several remarks worth noting. First, the equation \eqref{eq:d6} is the empirical correlation of the residuals from the outcome model and the residuals from the propensity of treatment assignment. Thus, GPMatch method attempts to induce independence between  two residuals - one from treatment selection process and one from the outcome modeling, just as the G-estimation equation suggested in \cite{robins2000} and later in \cite{vansteelandt2014}. Unlike the moment based G-estimator, which requires fitting of two separate models for the outcome and propensity score, the GPMatch approach estimates covariance parameters the same time as it estimates the treatment and mean function parameters. All within a full Bayesian framework. 

Second, some data points may have treatment propensity close to 0 or 1. Those data usually are a cause of concern in causal inference. In the naive regression type of model, it may cause unstable estimation without added regularization.  In the  IPTW type of method, a few data points may put undue influence over the estimation of treatment effect. In matching methods, these data points often are discarded. Such practice could lead to sample no longer representative of the target population.  Like the G-estimation,  we see in the equation \eqref{eq:d6}, these data points receive zero or near zero value of \((A_i - \tilde{A_i})\), putting very little influence over the estimation of treatment effect. Thus GPMatch shares the same added robustness as the G-estimation against the lack of overlapping.    

At last, the GPMatch model with a parametric mean function will be predicting the potential outcomes for any new unit.
%, by \(\hat{Y_i} = \boldsymbol{Z_i^\prime \hat{\gamma}} + \boldsymbol{ \Sigma_i \Sigma^{-1}}(\boldsymbol{Y_n}-\boldsymbol{Z^\prime \hat{\gamma}})\),  where \(\boldsymbol{\Sigma_i}\) denotes the i-th row of \( \boldsymbol{\Sigma} \). 
Given the model setup, two regression surfaces are predicted, where the distance between the two regression surfaces represents the treatment effect.  By including the treatment by covariate interactions, the model could offer conditional treatment effect as a function of the patient characteristics. Although the model specifications presented in section 2.3 suggest using a parametric linear regression equation for modeling  the treatment effect \(\tau(x_i)\), it is always difficult to know if any higher order terms should be included in the model. One may consider introducing a few fixed basis functions instead, estimation of the regression coefficients could inform existence of any nonlinear or heterogeneous treatment effect. 
   
\end{remark}

%Section 4. 
\section{Simulation Studies}

To empirically evaluate the performances of GPMatch in a real world setting where neither matching structure nor functional form of the outcome model are known, we conducted three sets of simulation studies to evaluate the performances of the GPMatch approach to causal inference. The first set evaluated frequentist performance of GPMatch. The second set compared the performance of GPMatch against MD match, and the last  set utilized the widely used Kang and Schafer design, comparing the performance of GPMatch against some commonly used methods.

In all simulation studies, the GPMatch approach used squared exponential covariate function, including only treatment indicator in the mean and all observed covariates into the covariance function, unless otherwise noted. The results were compared with the following widely used causal inference methods: sub-classification by PS quantile (QNT-PS); AIPTW, linear model with PS adjustment (LM-PS), linear model with spline fit PS adjustment (LM-sp(PS)) and BART. Cubic B-splines with knots based on quantiles of PS were used for LM-sp(PS). We also considered direct linear regression model (LM) as a comparison. The ATE estimates were obtained by averaging over 5000 posterior MCMC draws, after 5,000 burn in.  For each scenario, three sample sizes were considered, \(N\) = 100, 200, and 400.The standard error and the 95\% symmetric interval estimate of ATE for each replicate were calculated from the 5,000 MCMC chain. For comparing performances of different methods, all results were summarized over N=100 replicates by the root mean square error RMSE \(=\sqrt{\sum (\hat{\tau}_i-\tau)^2/N}\), median absolute error MAE \(=median\mid \hat{\tau}_i-\tau\mid\), coverage rate Rc = (the number of intervals that include \(\tau)/N\) of the 95\% symmetric posterior interval, the averaged standard error estimate \(SE_{ave}=\sum \hat{\sigma}_i/N\), where \(\hat{\sigma}_i\) is the square root of the estimated standard deviation of \(\hat{\tau}_i\), and the standard error of ATE was calculated from 100 replicates \(SE_{emp}=\sqrt{\sum (\hat{\tau}_i-\bar{\hat{\tau_i}})^2/(N-1)}\).

% section 4.1
\subsection{Well Calibrated Frequentist Performances}
Let the observed covariate \(x \sim N(0,1)\) and the unobserved covariates \(\{U_0, U_1, U_2, \epsilon\} \sim^{iid} N(0,1)\). The potential outcome was generated by \(y^{(a)} = e^x+(1+\gamma_1 U_1)\times a + \gamma_0 U_0\) for \(a=0,1\), where the true treatment effect was \(1+\gamma_1 U_{1i}\) for the i-th individual unit. The \((U_0,U_1)\) are unobserved covariates. The treatment was selected for each individual following \(logit(P(A=1|X))=-0.2+(1.8X)^{1/3} + \gamma_2 U_2^2\). The observed outcome was generated by \(y | x,a = y^{(a)} + \gamma_3 \epsilon \).  Four parameter settings were considered for the  combinations of \(\{\gamma_0, \gamma_1 , \gamma_2, \gamma_3 \}\): \(\{0.5, 0, 0, \sqrt{0.75}\} \), \(\{1, 0.15, 0, 0\} \), \(\{0.5, 0, 0.7, \sqrt{0.75}\} \), and \(\{1, 0.15, 0.7, 0\} \). In the \(1^{st}\) and \(3^{rd}\) settings, let \(\tau_i = 1\). In the \(2^{nd}\) and \(4^{th}\) settings, the treatment effect \(\tau_i \sim (1, \gamma_1^2)\), varying among individual units.  Except for the first setting, the simulation settings included unmeasured confounders  \(U_1\) and/or \(U_2\).    

\begin{figure}[bt]
\centering
\includegraphics[width=14cm]{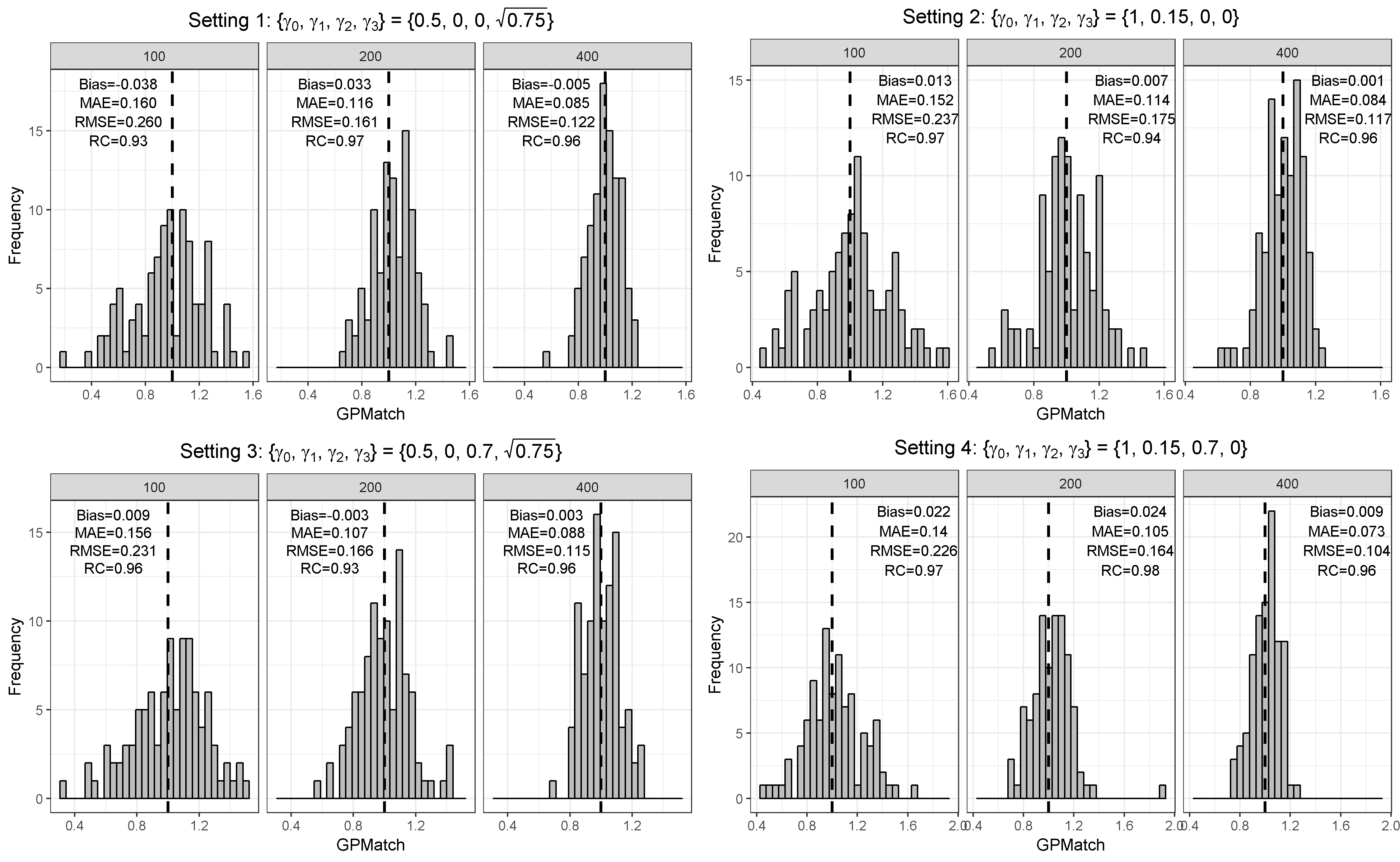}
\caption{Distribution of the GPMatch Estimate of ATE, by Different Sample Sizes under the Single Covariate Simulation Study Settings}
\label{fig:2}
\end{figure}

\rowcolors{1}{}{light-gray}
\begin{longtable}{rccccccc}
\caption{Results of ATE Estimates under the Single Covariate Simulation Study Settings.}\label{table:1}  \\
\hline
\showrowcolors
\rowcolor{light-gray2}\textbf{Method} & \textbf{Sample Size} & \textbf{RMSE} & \textbf{MAE} & \textbf{Bias} & \textbf{Rc} & \textbf{\(\boldsymbol{SE_{avg}}\)} & \textbf{\(\boldsymbol{SE_{emp}}\)}\\
\hline
\endfirsthead
\multicolumn{8}{c}%
{\tablename\ \thetable\ -- \textit{Continued from previous page}} \\
\hline
\rowcolor{light-gray2}\textbf{Method} & \textbf{Sample Size} & \textbf{RMSE} & \textbf{MAE} & \textbf{Bias} & \textbf{Rc} & \textbf{\(\boldsymbol{SE_{avg}}\)} & \textbf{\(\boldsymbol{SE_{emp}}\)}\\
\hline
\endhead
\hline \multicolumn{8}{r}{\textit{Continued on next page}} \\
\endfoot
\endlastfoot
\multicolumn{8}{c}{Setting 1: \(\{\gamma_0, \gamma_1 , \gamma_2, \gamma_3 \}=\{0.5, 0, 0, \sqrt{0.75}\}\)}\\
Gold & 100 & 0.243 & 0.165 & -0.066 & 0.93 & 0.216 & 0.235\\
  & 200 & 0.149 & 0.109 & 0.027 & 0.94 & 0.150 & 0.147\\
  & 400 & 0.123 & 0.087 & -0.007 & 0.93 & 0.107 & 0.123\\
 &  &  &  &  &  &  & \\
  GPMatch & 100 & 0.260 & 0.160 & -0.038 & 0.93 & 0.242 & 0.258\\
  & 200 & 0.161 & 0.116 & 0.033 & 0.97 & 0.167 & 0.159\\
  & 400 & 0.122 & 0.085 & -0.005 & 0.96 & 0.118 & 0.123\\
\multicolumn{8}{c}{Setting 2: \(\{\gamma_0, \gamma_1 , \gamma_2, \gamma_3 \}=\{1, 0.15, 0, 0\}\)}\\
Gold & 100 & 0.220 & 0.134 & -0.011 & 0.92 & 0.213 & 0.221\\
  & 200 & 0.159 & 0.098 & 0.001 & 0.94 & 0.151 & 0.159\\
  & 400 & 0.107 & 0.077 & -0.003 & 0.95 & 0.107 & 0.108\\
 &  &  &  &  &  &  & \\
GPMatch & 100 & 0.237 & 0.152 & 0.013 & 0.97 & 0.244 & 0.238\\
  & 200 & 0.175 & 0.114 & 0.007 & 0.94 & 0.169 & 0.175\\
  & 400 & 0.117 & 0.084 & 0.001 & 0.96 & 0.117 & 0.118\\
\multicolumn{8}{c}{Setting 3: \(\{\gamma_0, \gamma_1 , \gamma_2, \gamma_3 \}=\{0.5, 0, 0.7, \sqrt{0.75}\}\)}\\
Gold & 100 & 0.228 & 0.137 & -0.016 & 0.92 & 0.214 & 0.228\\
  & 200 & 0.154 & 0.099 & 0.005 & 0.94 & 0.151 & 0.155\\
  & 400 & 0.113 & 0.078 & 0.001 & 0.94 & 0.107 & 0.114\\
 &  &  &  &  &  &  & \\
GPMatch & 100 & 0.231 & 0.156 & 0.009 & 0.96 & 0.237 & 0.232\\
  & 200 & 0.166 & 0.107 & -0.003 & 0.93 & 0.164 & 0.167\\
  & 400 & 0.115 & 0.088 & 0.003 & 0.96 & 0.114 & 0.115\\
\multicolumn{8}{c}{Setting 4: \(\{\gamma_0, \gamma_1 , \gamma_2, \gamma_3 \}=\{1, 0.15, 0.7, 0\}\)}\\
Gold & 100 & 0.209 & 0.148 & 0.015 & 0.96 & 0.215 & 0.209\\
  & 200 & 0.136 & 0.098 & 0.008 & 0.97 & 0.152 & 0.136\\
  & 400 & 0.095 & 0.076 & -0.002 & 0.98 & 0.107 & 0.095\\
 &  &  &  &  &  &  & \\
GPMatch & 100 & 0.226 & 0.140 & 0.022 & 0.97 & 0.238 & 0.226\\
  & 200 & 0.164 & 0.105 & 0.024 & 0.98 & 0.169 & 0.163\\
  & 400 & 0.104 & 0.073 & 0.009 & 0.96 & 0.114 & 0.104\\  
\hline  
\end{longtable}

\begin{tablenotes}\footnotesize
\item RMSE = root mean square error; MAE = median absolute error; Bias = Estimate-True; Rc = Rate of coverage by the 95\% interval estimate; \(SE_{avg}\) = average of standard error estimate from all replicate; \(SE_{emp}\) = standard error of ATE estimates from all replicate;
\item Gold: Using the true outcome generating model;
\item GPMatch: Bayesian marginal structural model with Gaussian process prior, only treatment effect is included in the mean function; covariance function includes \(X\). 
\bigskip
\end{tablenotes}
\rowcolors{1}{}{}

The simulation results were summarized in the histogram of the posterior mean over the 100 replicates across three sample sizes in Figure ~\ref{fig:2}. Table \ref{table:1} presented the results of GPMatch and the gold standard.  The gold standard was obtained by fitting the true outcome generating model.  Under all settings, GPMatch presented well calibrated frequentist properties with nominal coverage rate, and only slightly larger RMSE. The averaged bias, RMSE and MAE quickly improve as sample size increases, and perform as well as the gold standard with the sample size of 400. Comparison of the RMSE and MAE with the results using other causal inference methods were presented in Figures S1 - S4.

% section 4.2
\subsection{Compared to Manhalanobis Distance Matching}
To compare the performances between the MD matching and GPMatch, we considered a simulation study with two independent covariates \(x_1\), \(x_2\) from the uniform distribution \(U(-2,2)\), treatment was assigned by letting \(A_i\sim Ber(\pi_i)\), where 
\begin{equation*}
logit \pi_i=-x_1-x_2.
\end{equation*}
The potential outcomes were generated by 
\begin{equation*}
\begin{array}{cr}
y_i^{(a)} = 3+5a+x_{1i}^3,\\
Y_i|X_i, A_i \sim N(y_i^{(A_i)}, 1).
\end{array}
\end{equation*}
The true treatment effect is 5. Three different sample sizes were considered N= 100, 200 and 400. For each setting, 100 replicates were performed and the results were summarized.

\begin{figure}[bt]
\centering
\includegraphics[width=10cm]{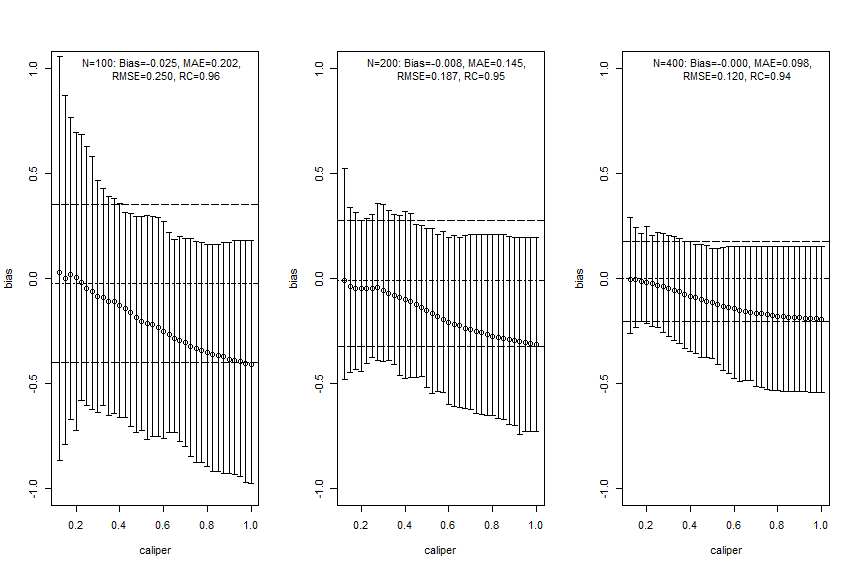}
\caption{Simulation Study Results of comparing GPMatch with Manhalanobis Distance Matching Methods. The circles are the averaged biases of estimates of ATE using Mahalanobis matching with corresponding calipers. The corresponding vertical lines indicate the ranges between 5th and 95th percentiles of the biases. The horizontal lines are the averaged ATE (short dashed line), and the 5th percentile and 95th percentile (long dashed line) of the biases of the estimates from GPMatch.}
\label{fig:1}
\end{figure}

We estimated ATE by applying Mahalanobis distance matching and GPMatch. The MD matching considered caliper varied from 0.125 to 1 with step size 0.025, including both \(X_1\) and \(X_2\)  in the matching using the function Match in R package Matching by \cite{sekhon2007multivariate}. The averaged bias and its 95\%-tile and 5\%-tile were presented as vertical lines corresponding to different calipers in Figure ~\ref{fig:1}. To be directly comparable to the matching approach, the GPMatch estimated the ATE by including treatment effect only in modeling the mean function, both \(X_1\) and \(X_2\) were considered in the covariance function modeling. The posterior results were generated with 5,000 MCMC sample after 5,000 burn-in. Its averaged bias (short dashed horizontal line) and 5\% and 95\%-tiles of the ATE estimate (long dashed horizontal lines) were presented on the Figure ~\ref{fig:1} for each the sample sizes. Also presented in the Figure were the bias, median absolute error (MAE), root mean square error (RMSE), and rate of coverage rate (Rc) summarized over 100 replicates of GPMatch. The bias from the matching method increases with caliper; the width of interval estimate varies by sample size and caliper. It reduces with increased caliper for the sample size of 100, but increases with increased caliper for sample size of 400. In contrast, GPMatch produced a much more accurate and efficient estimate of ATE for all sample sizes, with unbiased ATE estimate and nominal coverage rate. The 5\% and 95\%-tiles of ATE estimates are always smaller than those from the matching methods for all settings considered, suggesting better efficiency of GPMatch.

%Section 4.3
\subsection{Performance under Dual Misspecification }
Following the well-known simulation design suggested by \cite{Kang2007}, covariates \(z_1,z_2,z_3,z_4\) were independently generated from the standard normal distribution \(N(0,1)\). Treatment was assigned by \(A_i  \sim Ber(\pi_i)\), where
\begin{equation*}
logit \pi_i  = -z_{i1}  + 0.5z_{i2}  - 0.25z_{i3}  - 0.1z_{i4}.
\end{equation*}
The potential outcomes were generated for \(a=0,1\) by
\begin{equation*}
\begin{array}{cl}
y_i^{(a)} = 210+5a+27.4z_{i1}+13.7z_{i2}+13.7z_{i3}+13.7z_{i4},\\
Y_i|A_i, X_i \sim N(y^{(A_i)}, 1).
\end{array}
\end{equation*}
The true treatment effect is 5. To assess the performances of the methods under the dual miss-specifications, the transformed covariates \(x_1  = exp(z_1/2),x_2  = z_2/(1 + exp(z_1)) + 10,x_3  = \left(\frac{z_1 z_3}{25}+ 0.6\right)^3\), and \(x_4  = (z_2  + z_4  + 20)^2 \) were used in the model instead of \(z_i\).    

Two GPMatch models were considered: GPMatch1 modeled the treatment effect only and GPMatch2 modeled all four covariates \(X_1-X_4\) in the mean function model. Both included \(X_1-X_4\)  with four distinct length scale parameters. The PS was estimated using two approaches including the logistic regression model on \(X_1-X_4\) and the covariate balancing propensity score method  (CBPS, \cite{imai2014}) applied to \(X_1-X_4\).  The results corresponding to both versions of PS were presented. Summaries over all replicates were presented in Table \ref{table:2}, and the RMSE and the MAE were plotted in Figure ~\ref{fig:3}, for all methods considered. As a comparison, the gold standard which uses the true outcome generating model of \(Y \sim Z_1-Z_4\) was also presented.  Both GPMatch1 and GPMatch2 clearly outperforms all the other causal inference methods in terms of bias, RMSE, MAE, Rc, and the \(SE_{ave}\) is closely matched to \(SE_{emp}\).  The ATE and the corresponding SE estimates improve quickly as sample size increases for GPMatch. In contrast, the QNT\_PS, AIPT, LM\_PS and LM\_sp(PS) methods show little improvement over increased sample size, so is the simple LM. Improvements in the performance of GPMatch over existing methods are clearly evident, with more than 5 times accuracy in RMSE and MAE compared to all the other methods except for BART. Even compared to the BART results, the improvement in MAE is nearly twice for GPMatch2, and about 1.5 times for the GPMatch1. Similar results are evident in RMSE and averaged bias.  The lower than nominal coverage rate is mainly driven by the remaining bias, which quickly reduces as sample size increases.  Additional results are presented in Figure S5.

\rowcolors{1}{}{light-gray}
\begin{longtable}{rccccccc}
\caption{Results of ATE Estimates using Different Methods under the Kang and Shafer Dual Misspecification setting.}\label{table:2}  \\
\hline
\showrowcolors
\rowcolor{light-gray2}\textbf{Method} & \textbf{Sample Size} & \textbf{RMSE} & \textbf{MAE} & \textbf{Bias} & \textbf{Rc} & \textbf{\(\boldsymbol{SE_{avg}}\)} & \textbf{\(\boldsymbol{SE_{emp}}\)}\\
\hline
\endfirsthead
\multicolumn{8}{c}%
{\tablename\ \thetable\ -- \textit{Continued from previous page}} \\
\hline
\rowcolor{light-gray2}\textbf{Method} & \textbf{Sample Size} & \textbf{RMSE} & \textbf{MAE} & \textbf{Bias} & \textbf{Rc} & \textbf{\(\boldsymbol{SE_{avg}}\)} & \textbf{\(\boldsymbol{SE_{emp}}\)}\\
\hline
\endhead
\hline \multicolumn{8}{r}{\textit{Continued on next page}} \\
\endfoot
\endlastfoot

Gold & 100 & 0.224 & 0.150 & 0.011 & 0.95 & 0.225 & 0.225\\
 & 200 & 0.171 & 0.125 & -0.015 & 0.94 & 0.163 & 0.171\\
 & 400 & 0.102 & 0.063 & -0.015 & 0.96 & 0.112 & 0.102\\
GPMatch1 & 100 & 2.400 & 1.606 & -1.254 & 0.92 & 2.158 & 2.057\\
 & 200 & 1.663 & 1.309 & -1.051 & 0.86 & 1.213 & 1.295\\
 & 400 & 0.897 & 0.587 & -0.564 & 0.86 & 0.673 & 0.701\\
GPMatch2 & 100 & 1.977 & 1.358 & -0.940 & 0.91 & 1.672 & 1.748\\
 & 200 & 1.375 & 1.083 & -0.809 & 0.82 & 0.980 & 1.117\\
 & 400 & 0.761 & 0.484 & -0.432 & 0.87 & 0.567 & 0.629\\
QNT\_PS\(^a\) & 100 & 7.574 & 6.483 & -6.234 & 0.970 & 7.641 & 4.324\\
 & 200 & 7.408 & 6.559 & -6.615 & 0.860 & 5.199 & 3.353\\
 & 400 & 7.142 & 6.907 & -6.797 & 0.500 & 3.576 & 2.203\\
QNT\_PS\(^b\) & 100 & 8.589 & 7.360 & -7.177 & 0.970 & 7.541 & 4.744\\
 & 200 & 8.713 & 8.121 & -7.964 & 0.720 & 5.214 & 3.550\\
 & 400 & 8.909 & 7.980 & -8.399 & 0.300 & 3.607 & 2.987\\
LM & 100 & 6.442 & 5.183 & -5.556 & 0.65 & 3.571 & 3.277\\
 & 200 & 6.906 & 6.226 & -6.375 & 0.28 & 2.547 & 2.668\\
 & 400 & 7.005 & 6.649 & -6.702 & 0.04 & 1.796 & 2.048\\
AIPTW\(^a\) & 100 & 5.927 & 4.402 & -4.330 & 0.72 & 3.736 & 4.067\\
 & 200 & 19.226 & 5.262 & -7.270 & 0.59 & 4.874 & 17.888\\
 & 400 & 29.405 & 5.603 & -9.676 & 0.36 & 6.115 & 27.908\\
AIPTW\(^b\) & 100 & 5.410 & 4.243 & -3.659 & 0.77 & 3.780 & 4.005\\
 & 200 & 5.780 & 5.075 & -4.950 & 0.52 & 2.712 & 2.999\\
 & 400 & 6.204 & 5.482 & -5.652 & 0.24 & 2.105 & 2.569\\
LM\_PS\(^a\)  & 100 & 5.103 & 3.832 & -4.091 & 0.74 & 3.420 & 3.066\\
 & 200 & 5.392 & 4.648 & -4.793 & 0.53 & 2.452 & 2.483\\
 & 400 & 5.091 & 5.128 & -4.787 & 0.19 & 1.706 & 1.741\\
LM\_PS\(^b\)  & 100 & 5.451 & 4.156 & -4.528 & 0.72 & 3.427 & 3.051\\
 & 200 & 5.891 & 4.981 & -5.278 & 0.46 & 2.466 & 2.631\\
 & 400 & 5.585 & 5.452 & -5.272 & 0.13 & 1.726 & 1.852\\
LM\_sp(PS)\(^a\) & 100 & 4.809 & 3.161 & -3.598 & 0.79 & 3.165 & 3.207\\
 & 200 & 4.982 & 4.152 & -4.266 & 0.52 & 2.250 & 2.587\\
 & 400 & 4.470 & 4.038 & -4.127 & 0.23 & 1.559 & 1.727\\
LM\_sp(PS)\(^b\) & 100 & 4.984 & 3.619 & -3.806 & 0.77 & 3.095 & 3.233\\
 & 200 & 5.237 & 4.374 & -4.507 & 0.51 & 2.248 & 2.681\\
 & 400 & 4.856 & 4.484 & -4.494 & 0.18 & 1.585 & 1.851\\
BART & 100 & 3.148 & 2.504 & -2.491 & 0.79 & 2.163 & 1.935\\
 & 200 & 2.176 & 1.870 & -1.726 & 0.74 & 1.308 & 1.332\\
  & 400 & 1.283 & 0.942 & -0.997 & 0.71 & 0.757 & 0.812\\
\hline  
\end{longtable}

\begin{tablenotes}\footnotesize
\item \(^a\) Propensity score estimated using logistic regression on \(X_1-X_4\).
\item \(^b\) Propensity score estimated using CBPS on \(X_1-X_4\).
\item RMSE = root mean square error; MAE = median absolute error; Bias = Estimate-True; Rc = Rate of coverage by the 95\% interval estimate; \(SE_{avg}\) = average of standard error estimate from all replicate; \(SE_{emp}\) = standard error of ATE estimates from all replicate;
\item GPMatch1-2: Bayesian structural model with Gaussian process prior. GPMatch1 including only treatment effect, and GPMatch2 including both treatment effect and \(X_1-X_4\) in the mean function; both including \(X_1-X_4\) in the covariance function. 
\item QNT\_PS: Propensity score sub-classification by quintiles.
\item AIPTW: augmented inversed probability of treatment weighting;  
\item LM: linear regression modeling \(Y \sim X_1-X_4\); 
\item LM\_PS: linear regression modeling with propensity score adjustment.  
\item LM\_sp(PS): linear regression modeling with spline fit propensity score adjustment.
\item BART: Bayesian additive regression tree. 
\bigskip
\end{tablenotes}

\begin{figure}[!htb]
\centering
\includegraphics[width=13cm]{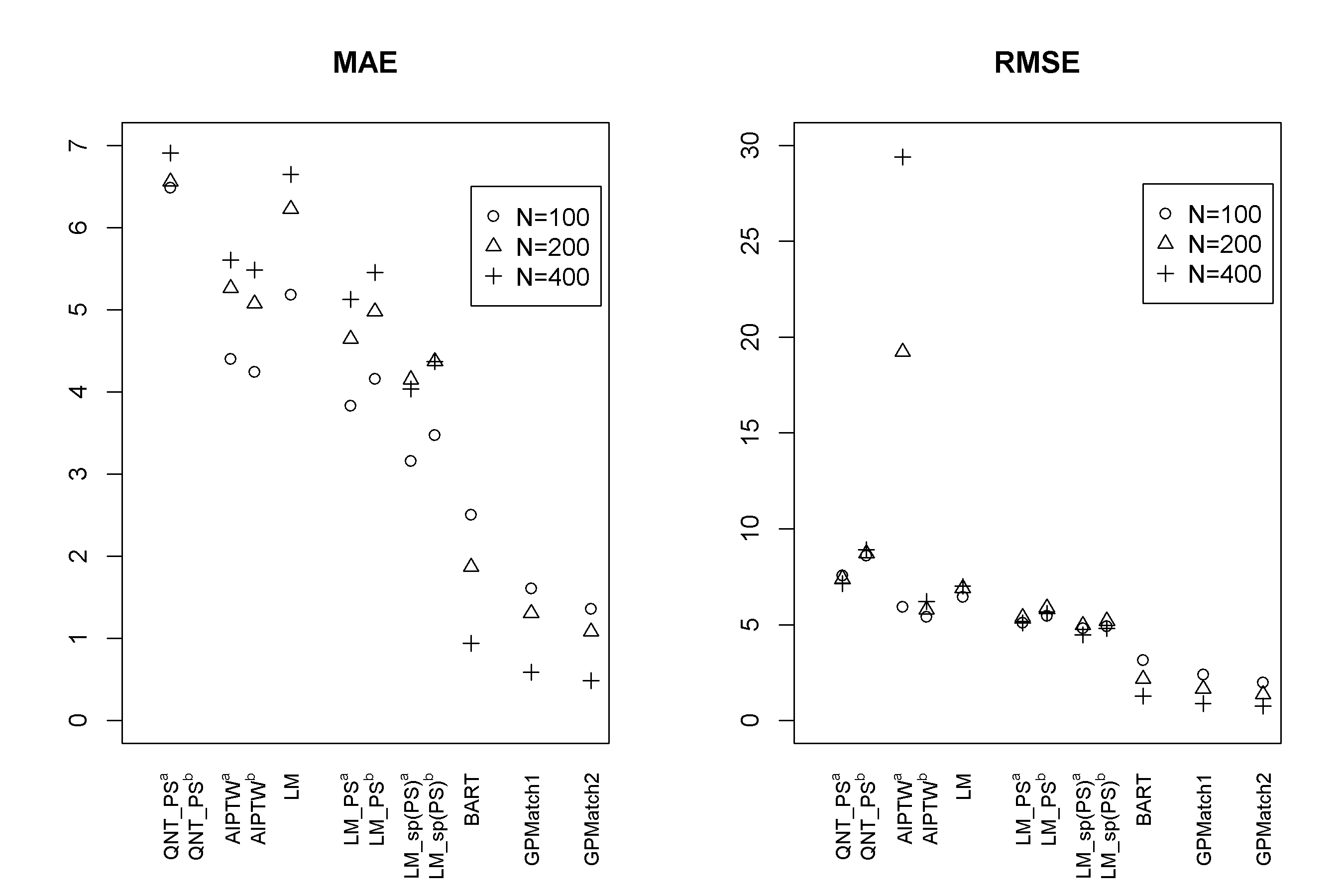}
\caption{The RMSE and MAE of ATE Estimates using Different Methods under the Kang and Shafer Simulation Study Setting. GPMatch1-2: Bayesian structural model with Gaussian Process prior. GPMatch1 including only treatment effect, and GPMatch2 including both treatment effect and \(X_1-X_4\) in the mean function; and \(X_1-X_4\) are included in the covariance function. QNT\_PS: Propensity score sub-classification by quintiles. AIPTW: augmented inverse probability of treatment weighting; LM: linear regression modeling \(Y \sim X_1-X_4\); LM\_PS: linear regression modeling with propensity score adjustment. LM\_sp(PS): linear regression modeling with spline fit propensity score adjustment}
\label{fig:3}
\end{figure}

\section{A Case Study}
JIA is a chronic inflammatory disease, the most common autoimmune disease affecting the musculoskeletal organ system, and a major cause of childhood disability. The disease is relatively rare, with an estimated incidence rate of 12 per 100,000 child-year (\cite{harrold2013}). There are many treatment options. Currently, the two common approaches are the non-biologic disease modifying anti-rheumatic drugs (DMARDs) and the biologic DMARDs. Limited clinical evidence suggest that early aggressive use of biologic DMARDs may be more effective (\cite{wallace2014}). Utilizing data collected from a completed prospectively followed up inception cohort research study (\cite{seid2014}), a retrospective chart review collected medication prescription records for study participants captured in the electronic health record system. This comparative study is aimed at understanding whether therapy using early aggressive combination of non-biologic and biologic DMARDs is more effective than the more commonly adopted non-biologic DMARDs monotherapy in treating children with recently (<6 months) diagnosed polyarticular course of JIA. The study is approved by the investigator's institutional IRB. 

\begin{figure}[!htb]
\centering
\includegraphics[width=12cm]{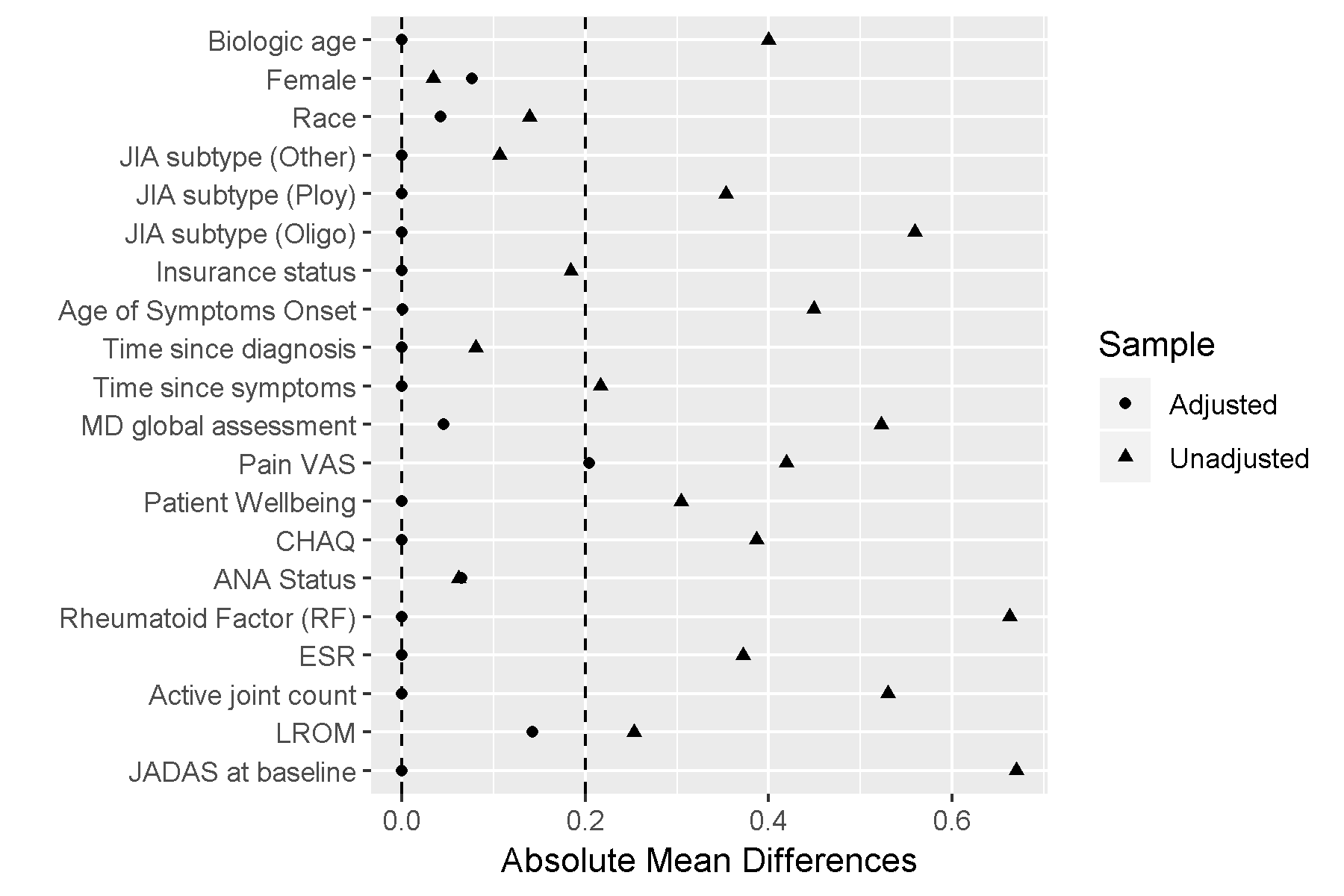}
\caption{Balance Check Results for the Cases Study}
\label{fig:4}
\end{figure}

The primary outcome is the Juvenile Arthritis Disease Activity Score (JADAS) after 6 months of treatment, a disease severity score calculated as the sum of four core clinical measures: physician's global assessment of disease activity (0-10), patient's self-assessment of overall wellbeing (0-10), erythrocyte sedimentation rate (ESR, standardized to 0-10), and number of active joint counts (AJC, truncated to 0-10). It ranges from 0 to 40, with 0 indicating no disease activity. Out of the 75 patients receiving either non-biological or the early combination of biological and non-biological DMARDs at baseline, 52 patients were treated by the non-biologic DMARDs and 23 were treated by the early aggressive combination DMARDs. The patients with longer disease duration, positive rheumatoid factor (RF) presence, higher pain visual analog scale (VAS) and lower baseline functional ability as measured by the childhood health assessment questionnaire (CHAQ), higher lost range of motion (LROM) and JADAS score are more likely to receive the biologic DMARDs prescription. The propensity score was derived using the CBPS method applied to the pre-determined important baseline confounders. The derived PS were able to achieve a desired covariate balance within the 0.2 absolute standardized mean difference (Figure ~\ref{fig:4}), and comparable distributions in important confounders (Figure S6).  
\begin{figure}[!htb]
\centering
\includegraphics[width=12cm]{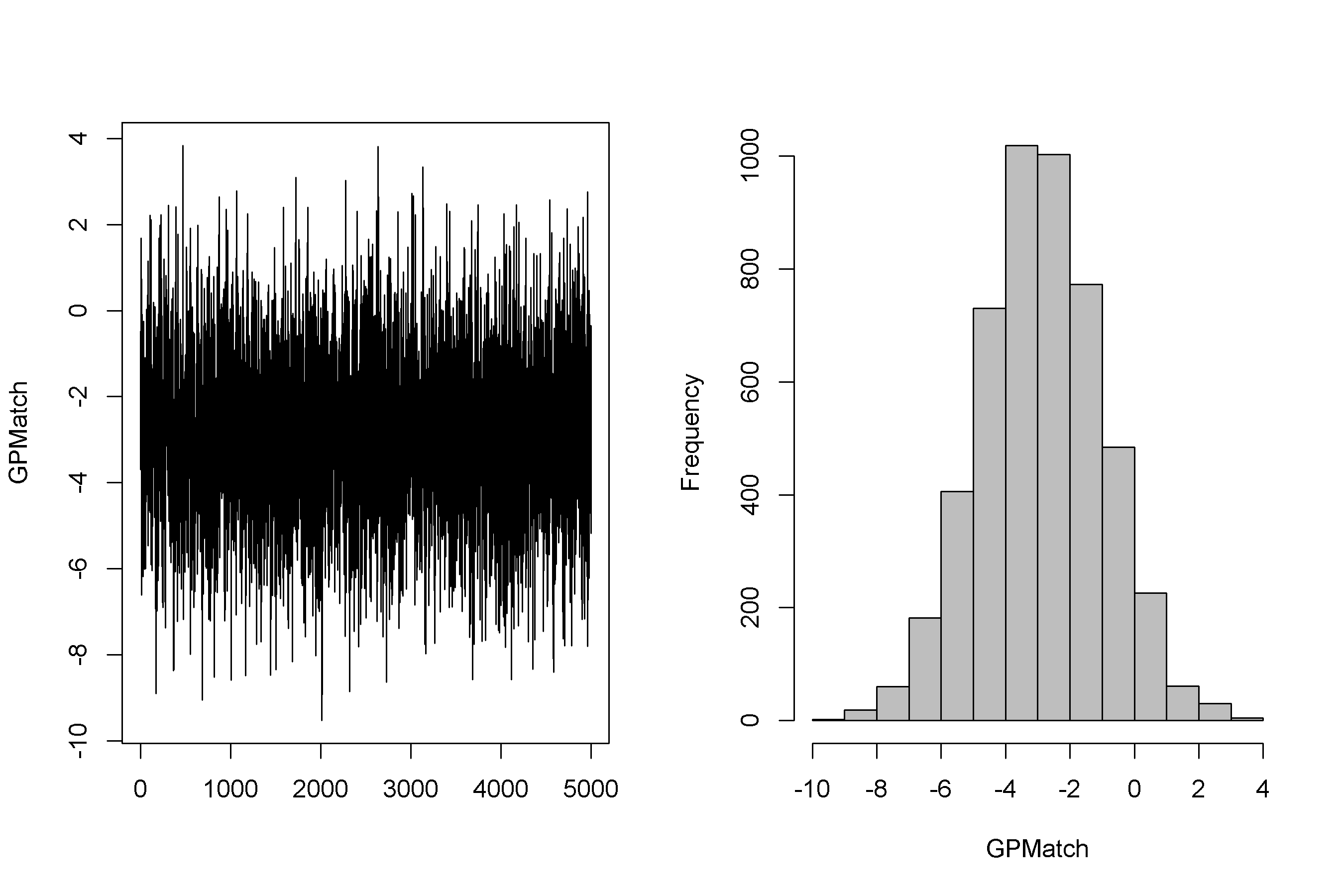}
\caption{Case Study Trace Plot and Histogram}
\label{fig:5}
\end{figure}

\begin{table}[bt]
%\begin{table}[hbt!]
\caption{Results of Case Study ATE Estimates with None-Matching Methods}
\label{table:3}
\begin{center}
\begin{threeparttable}

\begin{tabular}{rccccc}
\headrow
\thead{Method} & \thead{Estimate} & \thead{SD} & \thead{LL} & \thead{UL}\\
Naïve & -0.338 & 1.973 & -4.205 & 3.529\\
QNT\_PS & -0.265 & 0.792 & -1.817 & 1.286\\
AIPTW & -0.639 & 2.784 & -6.094 & 4.817\\
LM & -2.550 & 1.981 & -6.432 & 1.332\\
LM\_PS & -2.844 & 2.002 & -6.767 & 1.079\\
LM\_sp(PS) & -1.664 & 2.159 & -5.896 & 2.568\\
BART & -2.092 & 1.629 & -5.282 & 1.155\\
GPMatch & -2.902 & 1.912 & -6.650 & 0.789\\
\hline  

\end{tabular}

\begin{tablenotes} \footnotesize
\item SD = standard deviation; LL = lower limit; UL=upper limit; 
\item Naïve: Student-T two group comparisons; 
\item QNT\_PS: Propensity score sub-classification by quintiles.
\item AIPTW: augmented inversed probability of treatment weighting;  
\item LM: linear regression modeling \(Y \sim X\); 
\item LM\_PS: linear regression modeling with propensity score adjustment.  
\item LM\_sp(PS): linear regression modeling with spline fit propensity score adjustment;
\item BART: Bayesian additive regression tree; 
\item GPMatch: Bayesian structural model with Gaussian process prior.
\end{tablenotes}
\end{threeparttable}
\end{center}
\end{table}

The GPMatch model included the baseline JADAS, CHAQ, time since diagnosis at baseline, and time interval between baseline and the six month follow-up visit in modelling the covariance function. These four covariates, along with the binary treatment indicator and an indicator of positive test of rheumatoid factor were used in the partially linear mean function part of the GPMatch. Applying the proposed method, GPMatch obtained the average treatment effect of -2.90 with standard error of 1.91, and the 95\% credible interval of (-6.65, 0.79). Figure ~\ref{fig:5} presents the trace plot and histogram of the posterior distribution of the ATE estimate. The results suggest that, the early aggressive combination of non-biologic and biologic DMARDs as the first line of treatment is more effective, leading to a nearly 3 point of reduction in JADAS six months after treatment, compared to the non-biologic DMARDs treatment to children with a newly diagnosed disease. The results of ATE estimates by GPMatch, naive two group comparison and other existing causal inference methods are presented in Table \ref{table:3}. The LM, LM\_PS, LM\_sp(PS) and AIPTW include the same five covariates in the model along with the treatment indicator. BART used the treatment indicator and those covariates. While all results suggested effectiveness of an early aggressive use of biological DMARD, the naive, PS sub-classificiton by quintiles, and AIPTW suggested a much smaller ATE effect. The BART and PS adjusted linear regression produced results that were closer to the GPMatch results suggesting 2 or 3 points reduction in the JADAS score if treated by the early aggressive combination DMARDs therapy. None of the results were statistically significant at the 2-sided 0.05 level. 

We also applied the covariate matching method to the same dataset based on the same four baseline covariates. Table \ref{table:4} presents the results from using different caliper. As expected, as calipers narrow, the number of observations being discarded increases. Since only 10 patients had RF positive, thus, when the calipers were set to 1 or smaller, we cannot matching on RF positive anymore. Thus, for calipers smaller than 1, all subjects with positive RF were being excluded. When calipers were set at 0.5, about 50\% observations were discarded. When the calipers were set at 0.2, 62 out of 73 observations were discarded, rendering the results obtained from 11 observations only! The estimate of ATE was sensitive to the choices of calipers, ranged from -6.59 to -3.12, making it difficult to interpret the study results.

\begin{table}[bt]
\caption{Results of Case Study ATE Estimates with Matching Method in Case Study}
\label{table:4}
\begin{center}
\begin{threeparttable}

\begin{tabular}{rccccccc}
\headrow
\thead{Caliper} & \thead{} & \thead{2} & \thead{1} & \thead{0.8} & \thead{0.5} & \thead{0.4} & \thead{0.2}\\
ATE & & -3.117 & -4.043 & -4.035 & -5.577 & -6.592 & -3.864\\
SE & &  2.232 & 2.075 & 1.701 & 1.459 & 1.092 & 0.536\\
\# of obs dropped & & 1 & 10 & 21 & 34 & 49 & 62\\
& \multicolumn{7}{c}{Standardized mean difference between two treatment groups}\\
& \textbf{Before Match} & \multicolumn{6}{c}{\textbf{After Match}}\\
JADAS0 & 0.675 &  0.215 & 0.078 & 0.079 & 0.035 & 0.079 & -0.090\\
Time diagnosed & 0.233 &  0.013 & 0.020 & -0.006 & -0.010 & 0.041 & 0.048\\
CHAQ & 0.281 &  0.083 & 0.079 & 0.072 & 0.079 & -0.054 & -0.057\\
RF positive & 0.643 &  0.000 & 0.000 & NA* & NA* & NA* & NA*\\

\hline  

\end{tabular}

\begin{tablenotes} \footnotesize
\item Note: * When the caliper is less than 1, all of the observations with positive RF are excluded. 
\end{tablenotes}
\end{threeparttable}
\end{center}
\end{table}

\section{Conclusions and Discussions}
%The fundamental challenge in causal inference is that we can only observe one potential outcome. Furthermore, in observational study, treatment are assigned for deliberate reasons that are often related to the outcomes. Such treatment-by-indication confounding often is the cause for paradoxical observations.  
Bayesian approaches to causal inference commonly consider it as a missing data problem. However, as suggested in \cite{Ding2018}, the causal inference presents additional challenges that are unique in itself than the missing data alone.  Approaches not carefully address these unique challenges are vulnerable to model mis-specifications and could lead to seriously biased results. When not considering the treatment-by-indication confounding,  naive Bayesian regression approaches  could suffer from "regularity induced bias" (\cite{Hahn2018} ). Because no more than one potential outcome could be observed for a given individual unit, the correlation of \( (Y_i^{(1)} , Y_i^{(0)}) \)  is not directly identifiable, leading to "inferential quandary" as suggested in \cite{Dawid2000} .  Extensive simulations presented in \cite{Kang2007, Gutman2017, Hahn2018b} suggested poor operational characteristics observed in many widely adopted causal inference methods.

The proposed GPMatch method offers a full Bayesian causal inference approach that can effectively address the unique challenges inherent in causal inference. First, utilizing GP prior covariance function to model covariance of observed data, GPMatch could estimate the missing potential outcomes much like the matching method. Yet, it avoids pitfalls of many matching methods. No data is discarded, and no arbitrary caliper is required. Instead, the model allows the data to speak by itself via estimating length scale and variance parameters.  The SE covariance function of GP prior offers an alternative distance metric, which closely resembles Mahalanobis distance. It matches data points by the degree of matching proportional to the SE distance, without requiring specification of caliper. For this reason, the GPMatch could utilize data information better than matching procedure. Different length scale parameters are considered for different covariates used in defining SE covariance function. This allows the data to select the most important covariates to be matched on, and acknowledge some variable is more important than others. While the idea of using GP prior for Bayesian causal inference is not new. Utilizing GP covariance function as a matching device is a unique contribution of this study. The matching utility of GP covaraince function is presented analytically  by considering a setting when matching structure is known. We show that GPMatch enjoys doubly robust properties, in the sense that it correctly estimate the averaged treatment effect when either one of the conditions is true: 1) the mean function of the GPMatch correctly specifies the prognostic function of the potential outcome \(Y^{(0)}\); and 2) the GP prior covariance function correctly specifies matching structure.  We show that GPMatch estimates the treatment effect by inducing independence between two residuals: the residual from treatment propensity estimate and the residual from the outcome estimate, much like  the G-estimation method. Unlike the two-staged G-estimation, the estimations of the parameters in covariance function and the mean function for the GPMatch are performed simultaneously.  Therefor, GPMatch regression approach can integrate the benefits of the regression model and matching method and offers a natural way for Bayesian causal inference to address challenges unique to the causal inference problems. The robust and efficient proprieties of GPMatch are well supported by the simulation results designed  to reflect the most realistic settings, i.e. no knowledge of matching or functional form of outcome model is available. 

The validity of  causal inference by GPMatch rests on a weaker version of causal assumptions depicted in the Fig. 1 DAG.  Despite the fact that previous literature have questioned the SUTVA assumption (see stochastic consistency suggested by \cite{Cole2009} and \cite{VanderWeele2009}, and treatment variation discussed in \cite{Rubin1978}), no approach to our knowledge has explicitly acknowledged it as such. Rather, most of the methods imposing an overly rigid assumption that the treatment from the real world as having exactly the same meaning as those from the randomized and tightly controlled experiments, and that the observed outcome is an exact copy of the corresponding potential outcome. Here, our causal assumptions reflect more realistic setting that outcome could be measured with error, and the treatment received by different individuals may vary, even though the treatment prescribed is identical. The assumption of the ignorable treatment assignment is often used exchangeably with the assumption of no unmeasured confounder in the currently literature. The ignorable treatment assignment assumption is necessary to ensure the validity of causal inference obtained from the observed data.  In Fig 1. DAG, we show that unmeasured confounders are  admissible under the ignorable treatment assumption. Specifically, it allows for existence of \(U_1\) and \(U_2\) (both correlated with \(A\) and \(Y\)), as long as they do not open up the back-door path from Y to \(A\) conditional of the measured covariates.  In other words, the causal effect can be identified without bias if we could observe a minimum sufficient set that block the back-door path from \(Y\) to \(A\). Thus, it presents a weaker assumption than the assumption of no unmeasured confounder.  In case any potential violation of the causal assumptions is suspected, external information is needed and GPMatch can be extended to incoporate such uncertainty.  With a weaker version of causal assumptions and by explicitly modeling the mean and covariance functions, the GPMatch is more capable of defending against potential model misspecification in the challenging real world setting.

Full Bayesian modeling approach is particularly useful in comparative effectiveness research. It offers a coherent and flexible framework for incorporating prior knowledge and synthesizing information from different sources.  As a full Bayesian causal inference model, the GPMatch offers a very flexible and general approach to address more complex data types and structures natural to many causal inference problem settings. It can be directly extended to consider multilevel or cluster data structure, and to accommodate complex type of treatment such as multiple level treatment, continuous or composite type of treatment. The model could be extend to time-varying treatment setting without much difficulty by following the g-computation formula framework.  The post-treatment confounding can be addressed by incorporate the confounding variables into the modeling of mean function.  We are already implementing these extensions in an ongoing case study. Although we focused on presenting GPMatch for estimating the average treatment effect (ATE) in this study, the approach is directly applicable to estimation of averaged treatment effect in treated (ATT) and averaged treatment effect in control (ATC). In addition, it can be readily used for modeling treatment effect as a function of pre-specified treatment modifying factors. \cite{sivaganesan2017subgroup} suggested a Bayesian decision theory based approach  for identifying subgroup treatment effect in a randomized trial setting.  With GPMatch, the same idea could be applied to identify subgroup treatment analyzing real world data. Studies are ongoing to evaluate its performances for estimating heterogeneous treatment effect.  The GP regression has been extended to general types of outcomes including binary and count data (\cite{rasmussen2004gaussian}).  Future studies may further investigate its performance under the general types of outcome and data structures. Our simulation focused on comparing with the commonly used causal inference method. Future studies may consider comparisons of our method with other advanced Bayesian methods such as those proposed by \cite{roy2017bayesian} and \cite{Saarela2016}, as well as other advanced non-Bayesian approaches like Targeted MLE (\cite{van2006targeted}). 
  
The GP regression is a very flexible modeling technique, but it is computationally expensive. The time cost associated with GP regression increases at \(n^3\) rate, thus it can be challenging with large sample sizes. The Bayesian Gibbs Sampling algorithm we have used makes it even more demanding in computational resources. Some literature has offered solutions by applying GP to large data, such as \cite{banerjee2008}. Alternatively, one may consider using Bayesian Kernel regression as an approximation. Further studies are needed to improve the computational efficiency and to consider variable selection. It is well known the length scale parameter is hard to estimate. Researchers derived different kinds of priors for GP, for example the objective prior in \cite{Berger2001}, \cite{Kazianka2012}, and \cite{ren2013objective}. \cite{Gelfand2005} suggested using uniform prior for the inverse of the scale parameter in a spatial analysis, but we found that using a prior with preference to smooth surface was more suitable for our purpose. Researchers could also blend their knowledge in the prior to obtain a more efficient estimate. Here we considered squared exponential covariance function but different covariance function such as Matérn could also be considered. Simple block compound symmetry with one correlation coefficient parameter could be used as an alternative covariance matrix. Such blocked covariance set up could be useful particularly for a large sample size and where the data has a reasonable clustering structure, such as in the case of a multi-site study. Future study will explore along this direction. 
\section*{acknowledgements}
This work was supported by an award from the Patient Centered Outcome Research Institute (PCORI ME-1408-19894; PI. B. Huang) and a P\&M pilot award from the Centre for Clinical and Translational Science and Training, which is supported by the National Centre for Advancing Translational Sciences of the National Institutes of Health, under Award Number 5UL1TR001425-03). The Authors declare that there is no conflict of interest. 

\section*{conflict of interest}
The authors declare no conflict of interest .

\bibliography{bibliography}

\end{document}

% --- supplement: supplement.tex ---

\appendix

\section*{Supplemental Tables and Figures to "GPMatch: A Bayesian Doubly Robust Approach to Causal Inference with Gaussian Process Covariance Function As a Matching Tool"}

\setcounter{table}{0}
\renewcommand{\thetable}{S\arabic{table}}
\rowcolors{1}{}{light-gray}
\begin{longtable}{rccccccc}
\caption{Results of ATE Estimates from the Single Covariate Simulation Study Setting 1: \(\{\gamma_0, \gamma_1 , \gamma_2, \gamma_3 \}=\{0.5, 0, 0, \sqrt{0.75}\}\).} \label{table:s1}\\
\hline
\showrowcolors
\rowcolor{light-gray2}\textbf{Method} & \textbf{Sample Size} & \textbf{RMSE} & \textbf{MAE} & \textbf{Bias} & \textbf{Rc} & \textbf{\(\boldsymbol{SE_{avg}}\)} & \textbf{\(\boldsymbol{SE_{emp}}\)}\\
\hline
\endfirsthead
\multicolumn{8}{c}%
{\tablename\ \thetable\ -- \textit{Continued from previous page}} \\
\hline
\rowcolor{light-gray2}\textbf{Method} & \textbf{Sample Size} & \textbf{RMSE} & \textbf{MAE} & \textbf{Bias} & \textbf{Rc} & \textbf{\(\boldsymbol{SE_{avg}}\)} & \textbf{\(\boldsymbol{SE_{emp}}\)}\\
\hline
\endhead
\hline \multicolumn{8}{r}{\textit{Continued on next page}} \\
\endfoot
\endlastfoot
GPMatch & 100 & 0.26 & 0.16 & -0.038 & 0.93 & 0.241 & 0.258\\
 & 200 & 0.161 & 0.116 & 0.033 & 0.97 & 0.166 & 0.159\\
 & 400 & 0.122 & 0.085 & -0.005 & 0.96 & 0.118 & 0.123\\
QNT\_PS & 100 & 0.376 & 0.244 & 0.052 & 0.95 & 0.392 & 0.216\\
 & 200 & 0.309 & 0.220 & 0.127 & 0.94 & 0.275 & 0.283\\
 & 400 & 0.238 & 0.159 & 0.096 & 0.92 & 0.201 & 0.219\\
LM & 100 & 0.409 & 0.216 & -0.179 & 0.93 & 0.347 & 0.37\\
 & 200 & 0.291 & 0.183 & -0.119 & 0.89 & 0.25 & 0.266\\
 & 400 & 0.28 & 0.169 & -0.171 & 0.84 & 0.185 & 0.223\\
AIPTW1 & 100 & 0.82 & 0.341 & -0.176 & 0.96 & 0.554 & 0.805\\
 & 200 & 0.765 & 0.294 & -0.209 & 0.98 & 0.504 & 0.74\\
 & 400 & 0.753 & 0.251 & -0.231 & 0.96 & 0.426 & 0.721\\
AIPTW2 & 100 & 0.411 & 0.236 & -0.045 & 0.91 & 0.349 & 0.41\\
 & 200 & 0.288 & 0.203 & 0.029 & 0.93 & 0.268 & 0.288\\
 & 400 & 0.225 & 0.146 & 0.002 & 0.93 & 0.197 & 0.226\\
LM\_PS1 & 100 & 0.367 & 0.239 & -0.109 & 0.91 & 0.332 & 0.352\\
 & 200 & 0.272 & 0.161 & -0.051 & 0.91 & 0.246 & 0.268\\
 & 400 & 0.198 & 0.13 & -0.064 & 0.95 & 0.181 & 0.189\\
LM\_PS2 & 100 & 0.366 & 0.201 & -0.054 & 0.93 & 0.349 & 0.364\\
 & 200 & 0.256 & 0.181 & 0.031 & 0.99 & 0.253 & 0.255\\
 & 400 & 0.185 & 0.136 & -0.004 & 0.95 & 0.186 & 0.186\\
LM\_sp(PS1)  & 100 & 0.264 & 0.186 & -0.054 & 0.91 & 0.241 & 0.26\\
 & 200 & 0.156 & 0.102 & 0.023 & 0.97 & 0.167 & 0.155\\
 & 400 & 0.127 & 0.086 & -0.008 & 0.94 & 0.118 & 0.128\\
LM\_sp(PS2)  & 100 & 0.267 & 0.175 & -0.057 & 0.90 & 0.24 & 0.262\\
 & 200 & 0.155 & 0.11 & 0.02 & 0.98 & 0.167 & 0.154\\
 & 400 & 0.126 & 0.089 & -0.01 & 0.94 & 0.118 & 0.126\\
BART & 100 & 0.27 & 0.156 & -0.026 & 0.95 & 0.257 & 0.27\\
 & 200 & 0.185 & 0.145 & 0.048 & 0.95 & 0.178 & 0.18\\
  & 400 & 0.133 & 0.084 & 0.016 & 0.97 & 0.125 & 0.133\\
\hline 
\end{longtable}
\begin{tablenotes} \footnotesize
\item RMSE = root mean square error; MAE = median absolute error; Bias = Estimate-True; Rc = Rate of coverage by the 95\% interval estimate; \(SE_{avg}\) = average of standard error estimate from all replicate; \(SE_{emp}\) = standard error of ATE estimates from all replicate;
\item GPMatch: Bayesian structural model with Gaussian process prior, only treatment effect is included in the mean function; covariance function includes \(X\). 
\item QNT\_PS: Propensity score sub-classification by quintiles. 
\item AIPTW1 \& AIPTW2: augmented inversed probability of treatment weighting. 
\item LM\_PS1 \& LM\_PS2: linear regression modeling with propensity score adjustment; 
\item LM\_sp(PS1) \& LM\_sp(PS2): linear regression modeling with spline fit propensity score adjustment; 
\item BART: Bayesian additive regression tree. 
\item Propensity scores are estimated using different logistic models, with AIPTW1, LM\_PS1 \& LM\_sp(PS1) use PS estimated using logistic model \(logit A \sim X\) ; and AIPTW2, LM\_PS2 \& LM\_sp(PS2) use PS estimated using logistic model \(logit A \sim X^{1/3}\).  QNT\_PS using either PS estimates produces identical results.  
\end{tablenotes}

\rowcolors{1}{}{light-gray}
\begin{longtable}{rccccccc}
\caption{Results of ATE Estimates from the Single Covariate Simulation Study Setting 2: \(\{\gamma_0, \gamma_1 , \gamma_2, \gamma_3 \}=\{1, 0.15, 0, 0\}\).} \label{table:s2}\\
\hline
\showrowcolors
\rowcolor{light-gray2}\textbf{Method} & \textbf{Sample Size} & \textbf{RMSE} & \textbf{MAE} & \textbf{Bias} & \textbf{Rc} & \textbf{\(\boldsymbol{SE_{avg}}\)} & \textbf{\(\boldsymbol{SE_{emp}}\)}\\
\hline
\endfirsthead
\multicolumn{8}{c}%
{\tablename\ \thetable\ -- \textit{Continued from previous page}} \\
\hline
\rowcolor{light-gray2}\textbf{Method} & \textbf{Sample Size} & \textbf{RMSE} & \textbf{MAE} & \textbf{Bias} & \textbf{Rc} & \textbf{\(\boldsymbol{SE_{avg}}\)} & \textbf{\(\boldsymbol{SE_{emp}}\)}\\
\hline
\endhead
\hline \multicolumn{8}{r}{\textit{Continued on next page}} \\
\endfoot
\endlastfoot
GPMatch & 100 & 0.237 & 0.152 & 0.013 & 0.97 & 0.243 & 0.238\\
 & 200 & 0.175 & 0.114 & 0.007 & 0.94 & 0.169 & 0.175\\
 & 400 & 0.117 & 0.084 & 0.001 & 0.96 & 0.117 & 0.118\\
QNT\_PS & 100 & 0.436 & 0.271 & 0.089 & 0.95 & 0.466 & 0.429\\
 & 200 & 0.301 & 0.210 & 0.103 & 0.98 & 0.287 & 0.284\\
 & 400 & 0.254 & 0.171 & 0.096 & 0.88 & 0.209 & 0.236\\
LM & 100 & 0.427 & 0.255 & -0.214 & 0.93 & 0.399 & 0.371\\
 & 200 & 0.348 & 0.174 & -0.164 & 0.93 & 0.26 & 0.309\\
 & 400 & 0.318 & 0.166 & -0.198 & 0.81 & 0.191 & 0.25\\
AIPTW1 & 100 & 0.933 & 0.378 & -0.226 & 0.96 & 0.671 & 0.91\\
 & 200 & 3.853 & 0.246 & -0.478 & 0.95 & 0.861 & 3.842\\
 & 400 & 1.25 & 0.213 & -0.396 & 0.98 & 0.565 & 1.192\\
AIPTW2 & 100 & 0.413 & 0.306 & -0.029 & 0.96 & 0.411 & 0.414\\
 & 200 & 0.345 & 0.156 & -0.001 & 0.96 & 0.281 & 0.346\\
 & 400 & 0.244 & 0.124 & -0.021 & 0.97 & 0.221 & 0.244\\
LM\_PS1  & 100 & 0.352 & 0.265 & -0.087 & 0.97 & 0.368 & 0.343\\
 & 200 & 0.273 & 0.181 & -0.055 & 0.93 & 0.251 & 0.269\\
 & 400 & 0.192 & 0.11 & -0.082 & 0.92 & 0.188 & 0.174\\
LM\_PS2 & 100 & 0.337 & 0.251 & -0.018 & 0.98 & 0.397 & 0.339\\
 & 200 & 0.252 & 0.154 & -0.004 & 0.97 & 0.262 & 0.253\\
 & 400 & 0.175 & 0.101 & -0.004 & 0.98 & 0.192 & 0.176\\
LM\_sp(PS1)  & 100 & 0.237 & 0.158 & -0.006 & 0.97 & 0.242 & 0.238\\
 & 200 & 0.171 & 0.109 & -0.004 & 0.94 & 0.169 & 0.172\\
 & 400 & 0.118 & 0.083 & -0.003 & 0.96 & 0.118 & 0.118\\
LM\_sp(PS2)  & 100 & 0.248 & 0.163 & 0.002 & 0.96 & 0.243 & 0.249\\
 & 200 & 0.171 & 0.103 & -0.001 & 0.95 & 0.169 & 0.172\\
 & 400 & 0.116 & 0.087 & -0.006 & 0.96 & 0.118 & 0.117\\
Bart & 100 & 0.286 & 0.176 & 0.054 & 0.95 & 0.266 & 0.283\\
 & 200 & 0.182 & 0.115 & 0.034 & 0.96 & 0.18 & 0.18\\
  & 400 & 0.161 & 0.085 & 0.01 & 0.93 & 0.127 & 0.161\\
\hline 
\end{longtable}
\begin{tablenotes} \footnotesize
\item RMSE = root mean square error; MAE = median absolute error; Bias = Estimate-True; Rc = Rate of coverage by the 95\% interval estimate; \(SE_{avg}\) = average of standard error estimate from all replicate; \(SE_{emp}\) = standard error of ATE estimates from all replicate;
\item GPMatch: Bayesian structural model with Gaussian process prior, only treatment effect is included in the mean function; covariance function includes \(X\). 
\item QNT\_PS: Propensity score sub-classification by quintiles. 
\item AIPTW1 \& AIPTW2: augmented inversed probability of treatment weighting. 
\item LM\_PS1 \& LM\_PS2: linear regression modeling with propensity score adjustment; 
\item LM\_sp(PS1) \& LM\_sp(PS2): linear regression modeling with spline fit propensity score adjustment; 
\item BART: Bayesian additive regression tree. 
\item Propensity scores are estimated using different logistic models, with AIPTW1, LM\_PS1 \& LM\_sp(PS1) use PS estimated using logistic model \(logit A \sim X\) ; and AIPTW2, LM\_PS2 \& LM\_sp(PS2) use PS estimated using logistic model \(logit A \sim X^{1/3}\).  QNT\_PS using either PS estimates produces identical results.  
\end{tablenotes}
\bigskip

\rowcolors{1}{}{light-gray}
\begin{longtable}{rccccccc}
\caption{Results of ATE Estimates from the Single Covariate Simulation Study Setting 3: \(\{\gamma_0, \gamma_1 , \gamma_2, \gamma_3 \}=\{0.5, 0, 0.7, \sqrt{0.75}\}\).} \label{table:s2}\\
\hline
\showrowcolors
\rowcolor{light-gray2}\textbf{Method} & \textbf{Sample Size} & \textbf{RMSE} & \textbf{MAE} & \textbf{Bias} & \textbf{Rc} & \textbf{\(\boldsymbol{SE_{avg}}\)} & \textbf{\(\boldsymbol{SE_{emp}}\)}\\
\hline
\endfirsthead
\multicolumn{8}{c}%
{\tablename\ \thetable\ -- \textit{Continued from previous page}} \\
\hline
\rowcolor{light-gray2}\textbf{Method} & \textbf{Sample Size} & \textbf{RMSE} & \textbf{MAE} & \textbf{Bias} & \textbf{Rc} & \textbf{\(\boldsymbol{SE_{avg}}\)} & \textbf{\(\boldsymbol{SE_{emp}}\)}\\
\hline
\endhead
\hline \multicolumn{8}{r}{\textit{Continued on next page}} \\
\endfoot
\endlastfoot
GPMatch & 100 & 0.231 & 0.156 & 0.009 & 0.96 & 0.238 & 0.232\\
 & 200 & 0.166 & 0.107 & -0.003 & 0.93 & 0.164 & 0.167\\
 & 400 & 0.115 & 0.088 & 0.003 & 0.96 & 0.114 & 0.115\\
QNT\_PS X & 100 & 0.554 & 0.443 & 0.408 & 0.91 & 0.442 & 0.377\\
 & 200 & 0.364 & 0.282 & 0.263 & 0.92 & 0.298 & 0.252\\
 & 400 & 0.328 & 0.278 & 0.260 & 0.77 & 0.211 & 0.201\\
LM & 100 & 0.364 & 0.256 & 0.121 & 0.95 & 0.350 & 0.345\\
 & 200 & 0.279 & 0.179 & 0.155 & 0.88 & 0.256 & 0.233\\
 & 400 & 0.230 & 0.192 & 0.164 & 0.87 & 0.181 & 0.162\\
AIPTW1 & 100 & 0.533 & 0.432 & 0.413 & 0.81 & 0.355 & 0.339\\
 & 200 & 0.544 & 0.512 & 0.462 & 0.58 & 0.282 & 0.289\\
 & 400 & 0.504 & 0.466 & 0.469 & 0.24 & 0.196 & 0.185\\
AIPTW2 & 100 & 0.507 & 0.430 & 0.401 & 0.79 & 0.322 & 0.311\\
 & 200 & 0.481 & 0.444 & 0.421 & 0.54 & 0.234 & 0.234\\
 & 400 & 0.456 & 0.424 & 0.428 & 0.23 & 0.162 & 0.157\\
LM\_PS1 & 100 & 0.374 & 0.234 & 0.213 & 0.80 & 0.261 & 0.309\\
 & 200 & 0.382 & 0.281 & 0.289 & 0.67 & 0.191 & 0.251\\
 & 400 & 0.335 & 0.282 & 0.288 & 0.44 & 0.133 & 0.171\\
LM\_PS2 & 100 & 0.500 & 0.384 & 0.386 & 0.84 & 0.339 & 0.319\\
 & 200 & 0.495 & 0.432 & 0.427 & 0.61 & 0.251 & 0.251\\
 & 400 & 0.464 & 0.416 & 0.433 & 0.30 & 0.177 & 0.166\\
LM\_sp(PS1) & 100 & 0.235 & 0.162 & -0.001 & 0.94 & 0.235 & 0.237\\
 & 200 & 0.170 & 0.114 & -0.012 & 0.92 & 0.165 & 0.170\\
 & 400 & 0.115 & 0.090 & 0.002 & 0.96 & 0.115 & 0.116\\
LM\_sp(PS2) & 100 & 0.232 & 0.161 & 0.005 & 0.97 & 0.238 & 0.233\\
 & 200 & 0.167 & 0.114 & -0.008 & 0.95 & 0.167 & 0.167\\
 & 400 & 0.115 & 0.094 & 0.002 & 0.97 & 0.117 & 0.115\\
Bart & 100 & 0.274 & 0.191 & 0.114 & 0.97 & 0.265 & 0.251\\
 & 200 & 0.195 & 0.116 & 0.071 & 0.93 & 0.188 & 0.182\\
 & 400 & 0.138 & 0.095 & 0.057 & 0.94 & 0.136 & 0.126\\
\hline 
\end{longtable}
\begin{tablenotes} \footnotesize
\item RMSE = root mean square error; MAE = median absolute error; Bias = Estimate-True; Rc = Rate of coverage by the 95\% interval estimate; \(SE_{avg}\) = average of standard error estimate from all replicate; \(SE_{emp}\) = standard error of ATE estimates from all replicate;
\item GPMatch: Bayesian structural model with Gaussian process prior, only treatment effect is included in the mean function; covariance function includes \(X\). 
\item QNT\_PS: Propensity score sub-classification by quintiles. 
\item AIPTW1 \& AIPTW2: augmented inversed probability of treatment weighting. 
\item LM\_PS1 \& LM\_PS2: linear regression modeling with propensity score adjustment; 
\item LM\_sp(PS1) \& LM\_sp(PS2): linear regression modeling with spline fit propensity score adjustment; 
\item BART: Bayesian additive regression tree. 
\item Propensity scores are estimated using different logistic models, with AIPTW1, LM\_PS1 \& LM\_sp(PS1) use PS estimated using logistic model \(logit A \sim X\) ; and AIPTW2, LM\_PS2 \& LM\_sp(PS2) use PS estimated using logistic model \(logit A \sim X^{1/3}\).  QNT\_PS using either PS estimates produces identical results.  
\end{tablenotes}
\bigskip

\rowcolors{1}{}{light-gray}
\begin{longtable}{rccccccc}
\caption{Results of ATE Estimates from the Single Covariate Simulation Study Setting 4: \(\{\gamma_0, \gamma_1 , \gamma_2, \gamma_3 \}=\{1, 0.15, 0.7, 0\}\).} \label{table:s2}\\
\hline
\showrowcolors
\rowcolor{light-gray2}\textbf{Method} & \textbf{Sample Size} & \textbf{RMSE} & \textbf{MAE} & \textbf{Bias} & \textbf{Rc} & \textbf{\(\boldsymbol{SE_{avg}}\)} & \textbf{\(\boldsymbol{SE_{emp}}\)}\\
\hline
\endfirsthead
\multicolumn{8}{c}%
{\tablename\ \thetable\ -- \textit{Continued from previous page}} \\
\hline
\rowcolor{light-gray2}\textbf{Method} & \textbf{Sample Size} & \textbf{RMSE} & \textbf{MAE} & \textbf{Bias} & \textbf{Rc} & \textbf{\(\boldsymbol{SE_{avg}}\)} & \textbf{\(\boldsymbol{SE_{emp}}\)}\\
\hline
\endhead
\hline \multicolumn{8}{r}{\textit{Continued on next page}} \\
\endfoot
\endlastfoot
GPMatch & 100 & 0.226 & 0.140 & 0.022 & 0.97 & 0.238 & 0.226\\
 & 200 & 0.164 & 0.105 & 0.024 & 0.98 & 0.169 & 0.163\\
 & 400 & 0.104 & 0.073 & 0.009 & 0.96 & 0.114 & 0.104\\
QNT\_PS & 100 & 0.579 & 0.477 & 0.437 & 0.90 & 0.452 & 0.382\\
 & 200 & 0.369 & 0.312 & 0.279 & 0.87 & 0.308 & 0.244\\
 & 400 & 0.300 & 0.268 & 0.245 & 0.78 & 0.199 & 0.174\\
LM & 100 & 0.323 & 0.236 & 0.175 & 0.96 & 0.362 & 0.273\\
 & 200 & 0.292 & 0.214 & 0.187 & 0.90 & 0.259 & 0.226\\
 & 400 & 0.213 & 0.170 & 0.165 & 0.87 & 0.174 & 0.136\\
AIPTW1 & 100 & 0.596 & 0.456 & 0.461 & 0.78 & 0.377 & 0.380\\
 & 200 & 0.523 & 0.457 & 0.473 & 0.48 & 0.262 & 0.223\\
 & 400 & 0.474 & 0.459 & 0.433 & 0.32 & 0.210 & 0.192\\
AIPTW2 & 100 & 0.541 & 0.452 & 0.438 & 0.76 & 0.335 & 0.319\\
 & 200 & 0.488 & 0.451 & 0.446 & 0.48 & 0.231 & 0.199\\
 & 400 & 0.418 & 0.407 & 0.393 & 0.31 & 0.166 & 0.143\\
LM\_PS1 & 100 & 0.398 & 0.264 & 0.256 & 0.84 & 0.266 & 0.307\\
 & 200 & 0.376 & 0.297 & 0.288 & 0.64 & 0.191 & 0.244\\
 & 400 & 0.340 & 0.305 & 0.303 & 0.35 & 0.132 & 0.155\\
LM\_PS2 & 100 & 0.563 & 0.421 & 0.444 & 0.81 & 0.352 & 0.349\\
 & 200 & 0.498 & 0.471 & 0.453 & 0.50 & 0.253 & 0.207\\
 & 400 & 0.411 & 0.400 & 0.388 & 0.32 & 0.175 & 0.135\\
LM\_sp(PS1) & 100 & 0.226 & 0.140 & 0.004 & 0.970 & 0.237 & 0.227\\
 & 200 & 0.138 & 0.097 & 0.007 & 0.98 & 0.165 & 0.138\\
 & 400 & 0.104 & 0.067 & 0.006 & 0.97 & 0.116 & 0.104\\
LM\_sp(PS2) & 100 & 0.235 & 0.147 & 0.009 & 0.95 & 0.240 & 0.236\\
 & 200 & 0.144 & 0.112 & 0.012 & 0.96 & 0.167 & 0.144\\
 & 400 & 0.401 & 0.361 & 0.361 & 0.39 & 0.165 & 0.174\\
Bart & 100 & 0.262 & 0.183 & 0.105 & 0.97 & 0.271 & 0.241\\
 & 200 & 0.168 & 0.117 & 0.076 & 0.97 & 0.190 & 0.150\\
 & 400 & 0.130 & 0.097 & 0.064 & 0.96 & 0.136 & 0.114\\
\hline 
\end{longtable}
\begin{tablenotes} \footnotesize
\item RMSE = root mean square error; MAE = median absolute error; Bias = Estimate-True; Rc = Rate of coverage by the 95\% interval estimate; \(SE_{avg}\) = average of standard error estimate from all replicate; \(SE_{emp}\) = standard error of ATE estimates from all replicate;
\item GPMatch: Bayesian structural model with Gaussian process prior, only treatment effect is included in the mean function; covariance function includes \(X\). 
\item QNT\_PS: Propensity score sub-classification by quintiles. 
\item AIPTW1 \& AIPTW2: augmented inversed probability of treatment weighting. 
\item LM\_PS1 \& LM\_PS2: linear regression modeling with propensity score adjustment; 
\item LM\_sp(PS1) \& LM\_sp(PS2): linear regression modeling with spline fit propensity score adjustment; 
\item BART: Bayesian additive regression tree. 
\item Propensity scores are estimated using different logistic models, with AIPTW1, LM\_PS1 \& LM\_sp(PS1) use PS estimated using logistic model \(logit A \sim X\) ; and AIPTW2, LM\_PS2 \& LM\_sp(PS2) use PS estimated using logistic model \(logit A \sim X^{1/3}\).  QNT\_PS using either PS estimates produces identical results.  
\end{tablenotes}
\bigskip
%\pagebreak

%\section{Supplemental Figures}
\setcounter{figure}{0}
\renewcommand{\thefigure}{S\arabic{figure}} 
\begin{figure}[!htb]
\centering
\caption{Comparisons of root mean square error (RMSE), and median absolute error (MAE) of the ATE Estimates by Different  Methods Across Different Sample Sizes under the Simulation Setting 1: \(\{\gamma_0, \gamma_1 , \gamma_2, \gamma_3 \}=\{0.5, 0, 0, \sqrt{0.75}\}\)}
\includegraphics[width=12cm]{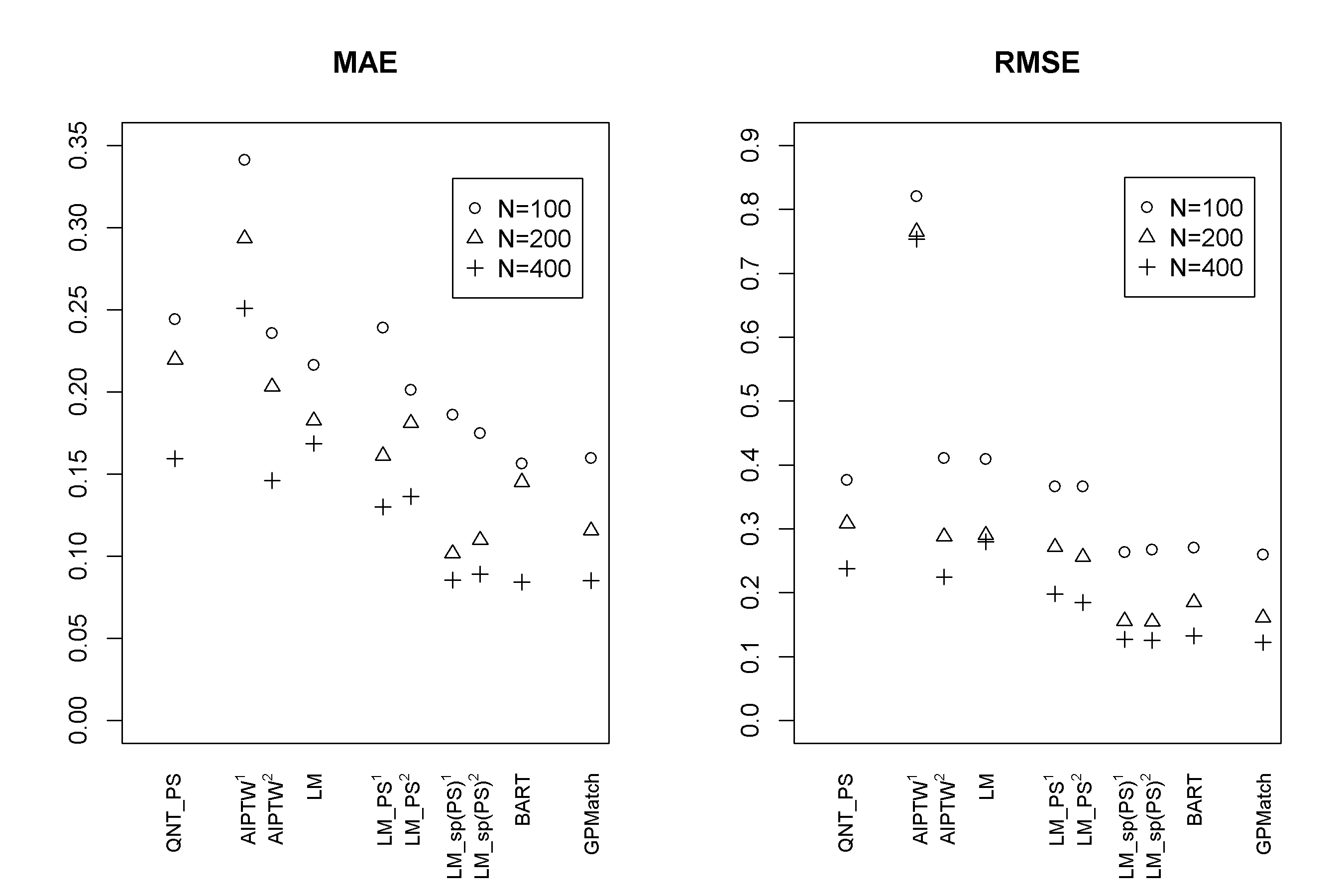}
\caption*{\(^1\) Propensity score estimated using logistic regression on \(logit A \sim X\).
\(^2\) Propensity score estimated using logistic regression on \(logit A \sim X^{1/3}\).
GPMatch: Bayesian structural model with Gaussian process prior.
QNT\_PS: Propensity score sub-classification by quintiles.
AIPTW: augmented inversed probability of treatment weighting;  
LM: linear regression modeling \(Y \sim X\); 
LM\_PS: linear regression modeling with propensity score adjustment.  
LM\_sp(PS): linear regression modeling with spline fit propensity score adjustment.
BART: Bayesian additive regression tree. 
}
\label{fig:s1}
\end{figure}

\begin{figure}[bt]
\centering
\caption{Comparisons of root mean square error (RMSE), and median absolute error (MAE) of the ATE Estimates by Different  Methods Across Different Sample Sizes under the Simulation Setting 2: \(\{\gamma_0, \gamma_1 , \gamma_2, \gamma_3 \}=\{1, 0.15, 0, 0\}\)}
\includegraphics[width=12cm]{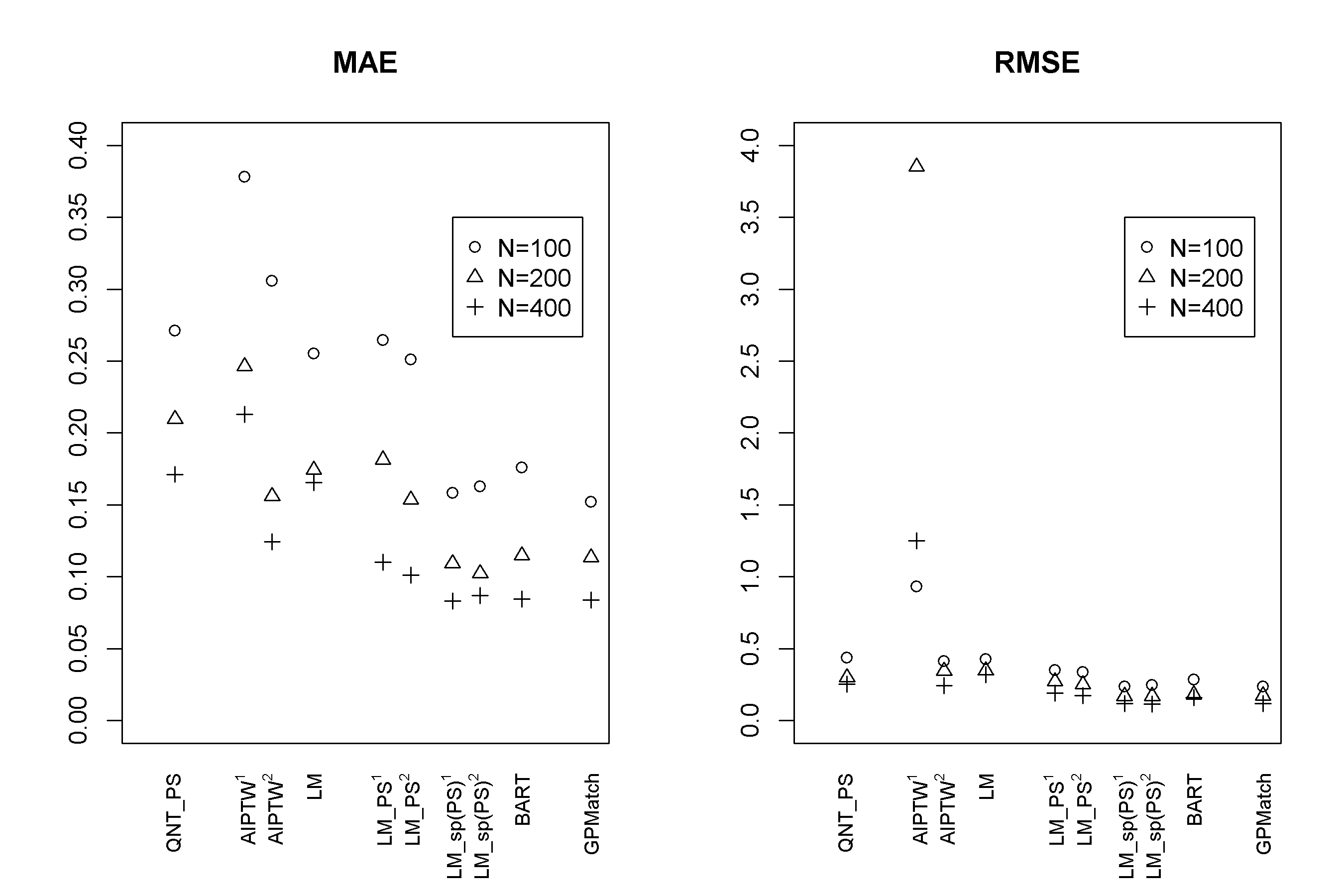}
\caption*{\(^1\) Propensity score estimated using logistic regression on \(logit A \sim X\).
\(^2\) Propensity score estimated using logistic regression on \(logit A \sim X^{1/3}\).
GPMatch: Bayesian structural model with Gaussian process prior.
QNT\_PS: Propensity score sub-classification by quintiles.
AIPTW: augmented inversed probability of treatment weighting;  
LM: linear regression modeling \(Y \sim X\); 
LM\_PS: linear regression modeling with propensity score adjustment.  
LM\_sp(PS): linear regression modeling with spline fit propensity score adjustment.
BART: Bayesian additive regression tree. 
}
\label{fig:s2}
\end{figure}

\begin{figure}[!htb]
\centering
\caption{Comparisons of root mean square error (RMSE), and median absolute error (MAE) of the ATE Estimates by Different  Methods Across Different Sample Sizes under the Simulation Setting 3: \(\{\gamma_0, \gamma_1 , \gamma_2, \gamma_3 \}=\{0.5, 0, 0.7, \sqrt{0.75}\}\)}
\includegraphics[width=12cm]{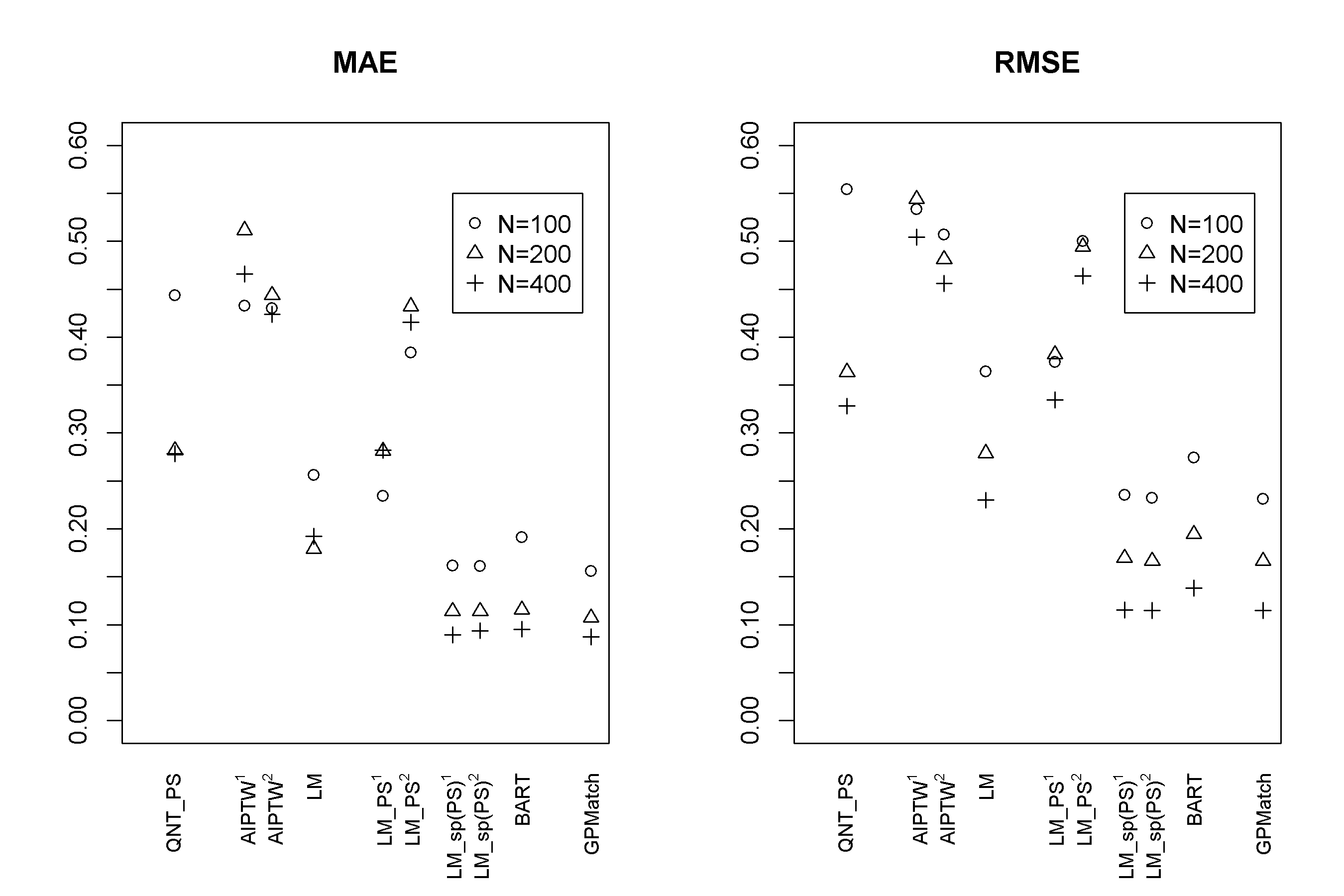}
\caption*{\(^1\) Propensity score estimated using logistic regression on \(logit A \sim X\).
\(^2\) Propensity score estimated using logistic regression on \(logit A \sim X^{1/3}\).
GPMatch: Bayesian structural model with Gaussian process prior.
QNT\_PS: Propensity score sub-classification by quintiles.
AIPTW: augmented inversed probability of treatment weighting;  
LM: linear regression modeling \(Y \sim X\); 
LM\_PS: linear regression modeling with propensity score adjustment.  
LM\_sp(PS): linear regression modeling with spline fit propensity score adjustment.
BART: Bayesian additive regression tree. 
}
\label{fig:s1a}
\end{figure}

\begin{figure}[bt]
\centering
\caption{Comparisons of root mean square error (RMSE), and median absolute error (MAE) of the ATE Estimates by Different  Methods Across Different Sample Sizes under the Simulation Setting 4: \(\{\gamma_0, \gamma_1 , \gamma_2, \gamma_3 \}=\{1, 0.15, 0.7, 0\}\)}
\includegraphics[width=12cm]{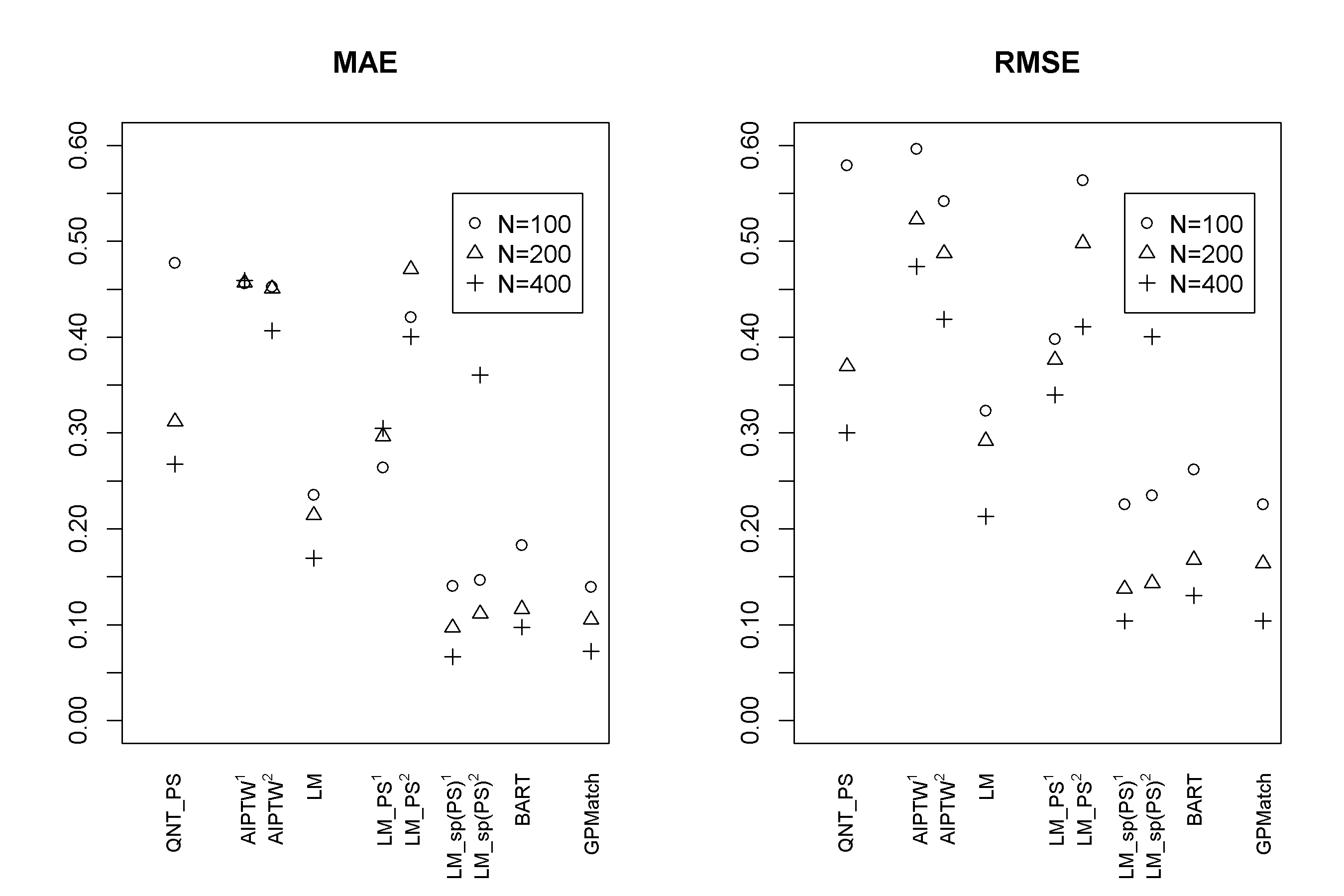}
\caption*{\(^1\) Propensity score estimated using logistic regression on \(logit A \sim X\).
\(^2\) Propensity score estimated using logistic regression on \(logit A \sim X^{1/3}\).
GPMatch: Bayesian structural model with Gaussian process prior.
QNT\_PS: Propensity score sub-classification by quintiles.
AIPTW: augmented inversed probability of treatment weighting;  
LM: linear regression modeling \(Y \sim X\); 
LM\_PS: linear regression modeling with propensity score adjustment.  
LM\_sp(PS): linear regression modeling with spline fit propensity score adjustment.
BART: Bayesian additive regression tree. 
}
\label{fig:s2b}
\end{figure}

\begin{figure}[bt]
\centering
\caption{Distribution of the Estimated by Different Sample Sizes ATE from GPMatch under the Kang and Shafer Dual Misspecifcation Setting. Upper panel presents the results of GPMatch with the treatment effect only in the mean function model; lower panel presents the results of GPMatch with the treatment effect and the \(X_1-X_4\) in the mean function model. Both included \(X_1-X_4\) in the covariate function.}
\includegraphics[width=12cm]{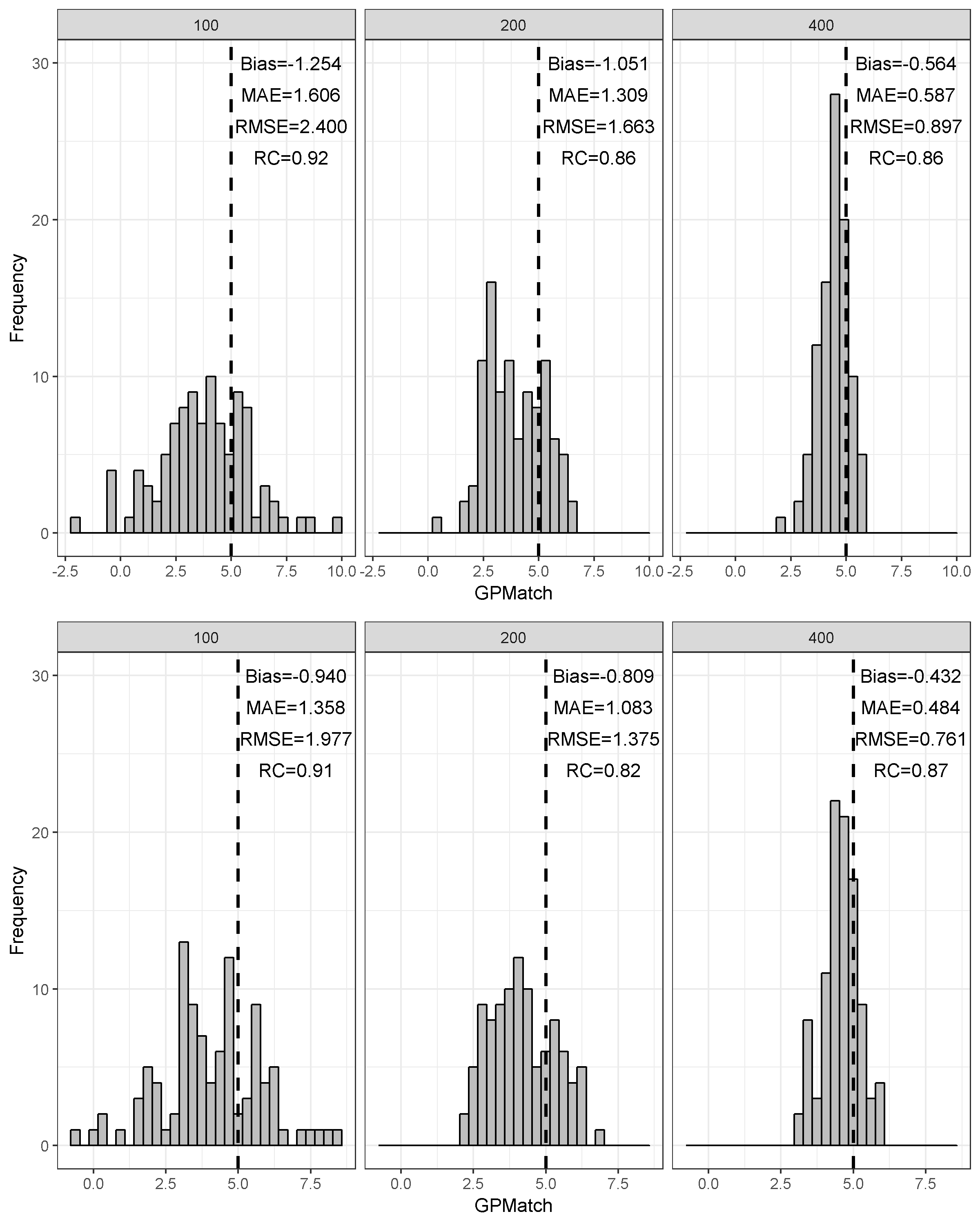}

\label{fig:s3}
\end{figure}

\begin{figure}[bt]
\centering
\caption{Distributions of key covariates in unweighted and weighted samples using inverse probability weighting of propensity scores for the case study}
\includegraphics[width=12cm]{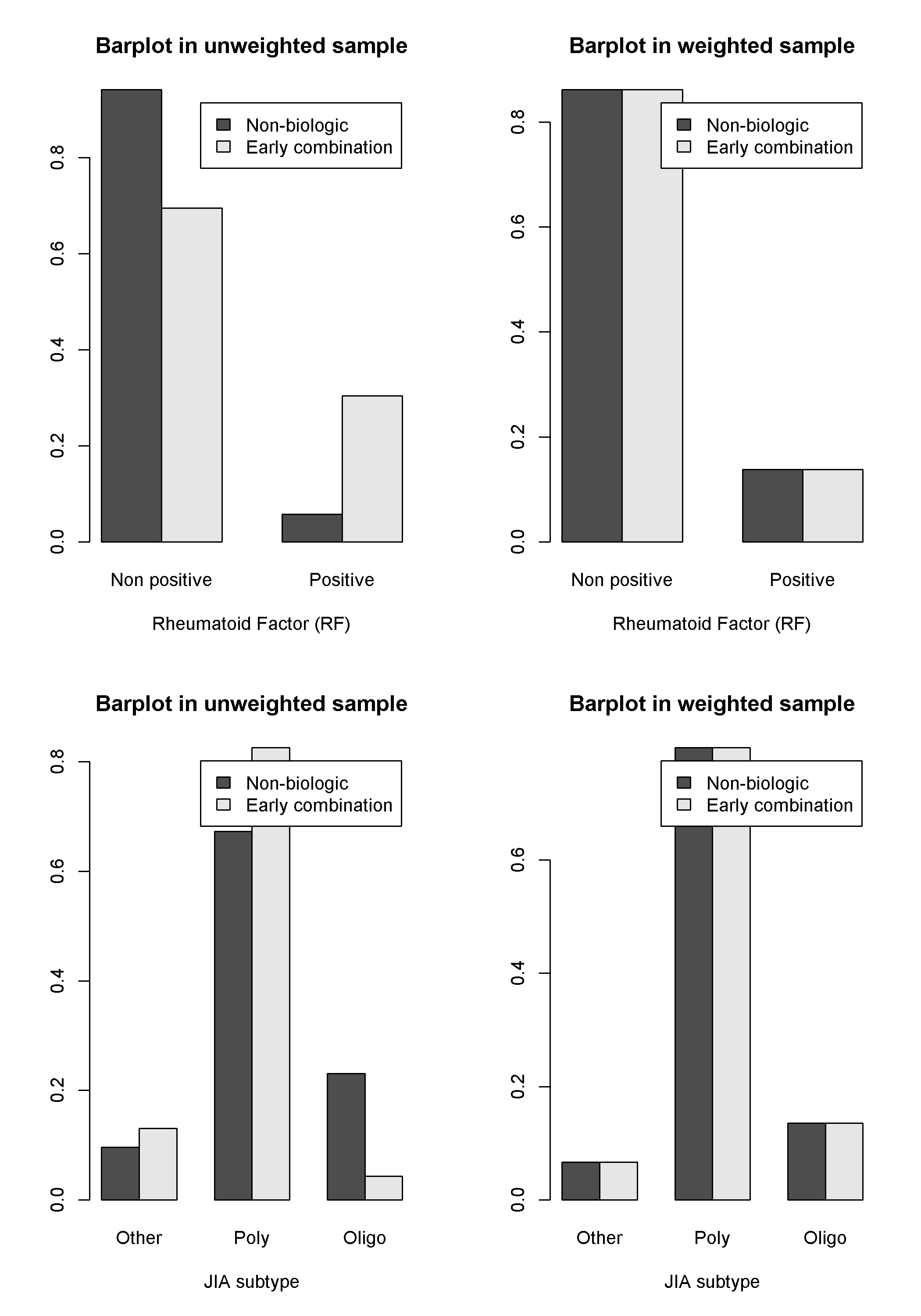}
\label{fig:s4a}
\end{figure}

\begin{figure}[bt]
\centering
\includegraphics[width=12cm]{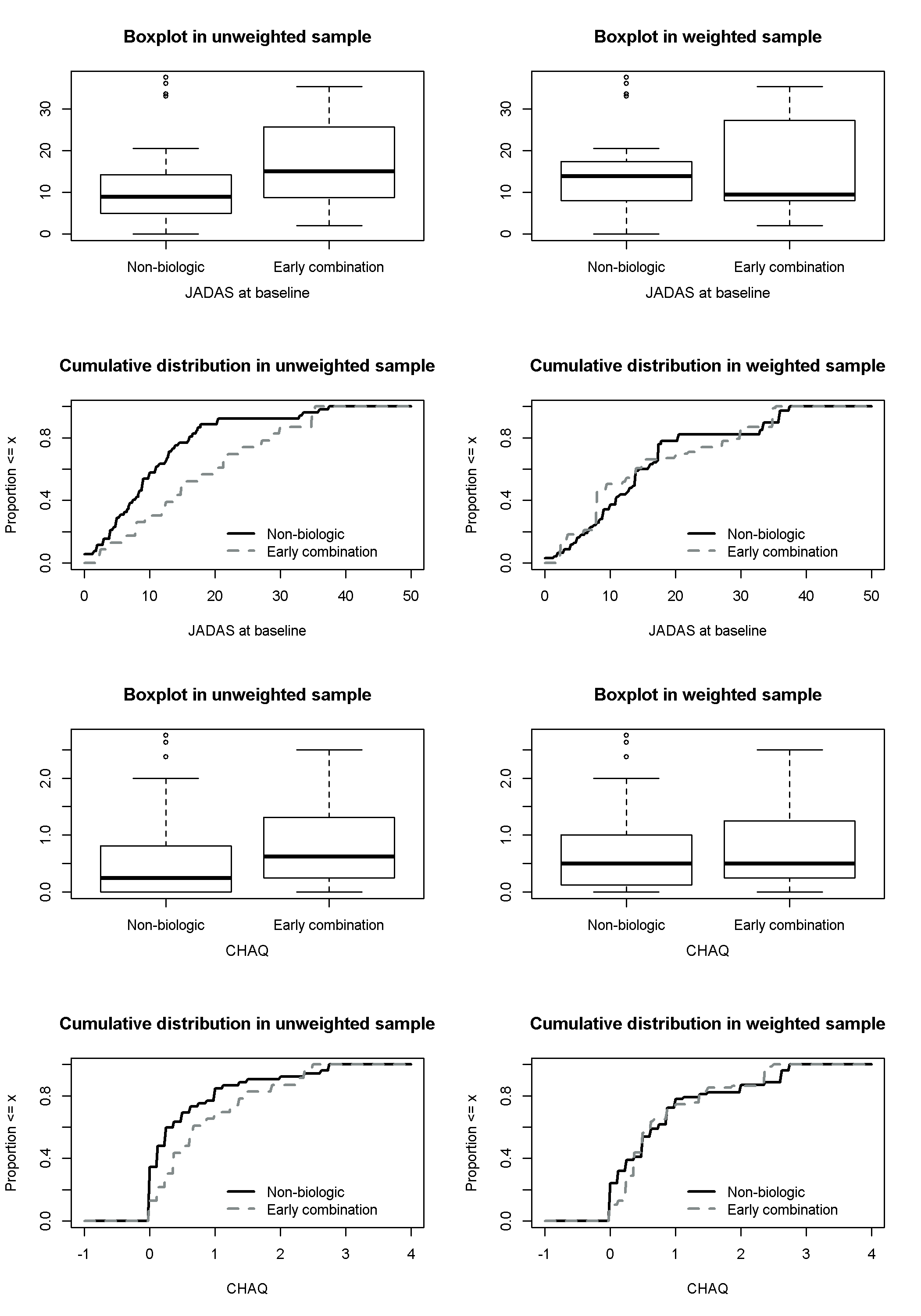}
\label{fig:s4b}
\end{figure}